\numberwithin{equation}{section}
\newtheorem{proposition}{Proposition}[section]
\theoremstyle{definition}
\newtheorem{definition}{Definition}[section]
\newtheorem{remark}{Remark}[section]
\newtheorem{example}{Example}[section]
\newenvironment{warning}[1][Warning.]{\begin{trivlist}
\item[\hskip \labelsep {\bfseries #1}]}{\end{trivlist}}
\newcommand{\Id}{\mathbbmss{1}}
\newcommand{\InHom}{\mbox{$\underline{\Hom}$}}
\newcommand{\InDiff}{\mbox{$\underline{\Diff}$}}
\newcommand{\rmd}{\textnormal{d}}
\newcommand{\rmo}{\textnormal{o}}
\DeclareMathOperator{\Vect}{Vect}
\DeclareMathOperator{\Diff}{Diff}
\DeclareMathOperator{\Hom}{Hom}
\newcommand{\catname}[1]{\textnormal{\texttt{#1}}}
\font\black=cmbx10 \font\sblack=cmbx7 \font\ssblack=cmbx5 \font\blackital=cmmib10  \skewchar\blackital='177
\font\sblackital=cmmib7 \skewchar\sblackital='177 \font\ssblackital=cmmib5 \skewchar\ssblackital='177
\font\sanss=cmss10 \font\ssanss=cmss8 
\font\sssanss=cmss8 scaled 600 \font\blackboard=msbm10 \font\sblackboard=msbm7 \font\ssblackboard=msbm5
\font\caligr=eusm10 \font\scaligr=eusm7 \font\sscaligr=eusm5  \font\fraktur=eufm10
\font\sfraktur=eufm7 \font\ssfraktur=eufm5 
\font\bsymb=cmsy10 scaled\magstep2
\def\all#1{\setbox0=\hbox{\lower1.5pt\hbox{\bsymb
       \char"38}}\setbox1=\hbox{$_{#1}$} \box0\lower2pt\box1\;}
\def\exi#1{\setbox0=\hbox{\lower1.5pt\hbox{\bsymb \char"39}}
       \setbox1=\hbox{$_{#1}$} \box0\lower2pt\box1\;}
\def\sss#1{{\fam\ssfam\relax#1}}
\def\hpb#1{\setbox0=\hbox{${#1}$}
    \copy0 \kern-\wd0 \kern.2pt \box0}
\def\vpb#1{\setbox0=\hbox{${#1}$}
    \copy0 \kern-\wd0 \raise.08pt \box0}
\def\pmb#1{\setbox0\hbox{${#1}$} \copy0 \kern-\wd0 \kern.2pt \box0}
\def\pmbb#1{\setbox0\hbox{${#1}$} \copy0 \kern-\wd0
      \kern.2pt \copy0 \kern-\wd0 \kern.2pt \box0}
\def\pmbbb#1{\setbox0\hbox{${#1}$} \copy0 \kern-\wd0
      \kern.2pt \copy0 \kern-\wd0 \kern.2pt
    \copy0 \kern-\wd0 \kern.2pt \box0}
\def\pmxb#1{\setbox0\hbox{${#1}$} \copy0 \kern-\wd0
      \kern.2pt \copy0 \kern-\wd0 \kern.2pt
      \copy0 \kern-\wd0 \kern.2pt \copy0 \kern-\wd0 \kern.2pt \box0}
\def\pmxbb#1{\setbox0\hbox{${#1}$} \copy0 \kern-\wd0 \kern.2pt
      \copy0 \kern-\wd0 \kern.2pt
      \copy0 \kern-\wd0 \kern.2pt \copy0 \kern-\wd0 \kern.2pt
      \copy0 \kern-\wd0 \kern.2pt \box0}
\mathchardef\za="710B  
\mathchardef\zb="710C  
\mathchardef\zg="710D  
\mathchardef\zd="710E  
\mathchardef\zve="710F 
\mathchardef\zz="7110  
\mathchardef\zh="7111  
\mathchardef\zvy="7112 
\mathchardef\zi="7113  
\mathchardef\zk="7114  
\mathchardef\zl="7115  
\mathchardef\zm="7116  
\mathchardef\zn="7117  
\mathchardef\zx="7118  
\mathchardef\zp="7119  
\mathchardef\zr="711A  
\mathchardef\zs="711B  
\mathchardef\zt="711C  
\mathchardef\zu="711D  
\mathchardef\zvf="711E 
\mathchardef\zq="711F  
\mathchardef\zc="7120  
\mathchardef\zw="7121  
\mathchardef\ze="7122  
\mathchardef\zy="7123  
\mathchardef\zf="7124  
\mathchardef\zvr="7125 
\mathchardef\zvs="7126 
\mathchardef\zf="7127  
\mathchardef\zG="7000  
\mathchardef\zD="7001  
\mathchardef\zY="7002  
\mathchardef\zL="7003  
\mathchardef\zX="7004  
\mathchardef\zP="7005  
\mathchardef\zS="7006  
\mathchardef\zU="7007  
\mathchardef\zF="7008  
\mathchardef\zW="700A  
\mathchardef\zC="7009  
\newcommand{\be}{\begin{equation}}
\newcommand{\ee}{\end{equation}}
\newcommand{\bea}{\begin{eqnarray}}
\newcommand{\eea}{\end{eqnarray}}
\def\*{{\textstyle *}}
\newcommand{\R}{{\mathbb R}}
\newcommand{\s}{{\textstyle *}}
\def\Hom{\sss{Hom}}
\def\sT{{\sss T}}
\def\st{{\sss t}}
\def\s*{{\scriptstyle *}}
\newcommand{\beas}{\begin{eqnarray*}}
\newcommand{\eeas}{\end{eqnarray*}}
\title{On a geometric framework for Lagrangian supermechanics}
   \author{Andrew James Bruce}
  \address{IMPAN, ul. Sniadeckich 8, 00--656 Warsaw, POLAND.} \email{andrewjamesbruce@googlemail.com}
 \author{Katarzyna  Grabowska} 
   \address{Faculty of Physics, University of Warsaw, Pasteura 5, 02-093 Warszawa, POLAND.} \email{konieczn@fuw.edu.pl}
 \author{Giovanni Moreno}
   \address{IMPAN, ul. Sniadeckich 8, 00--656 Warsaw, POLAND.} \email{gmoreno@impan.pl}
\date{\today}
\begin{document}

\begin{abstract}
We re--examine classical mechanics with both commuting and anticommuting  degrees of freedom. We do this by defining the phase dynamics of a general Lagrangian system  as an implicit differential equation in  the spirit of Tulczyjew. Rather than parametrising  our basic degrees of freedom   by a specified Grassmann algebra,    we use arbitrary supermanifolds by following the categorical approach to supermanifolds.\par
\smallskip\noindent
{\bf Keywords:} 
supermechanics; supermanifolds; Lagrangian systems; Tulczyjew triples; curves.\par
\smallskip\noindent
{\bf MSC 2010:}  58A50;  58C50;  70H33;  70H99; 70G45.
\end{abstract}


 \maketitle

\setcounter{tocdepth}{2}
 \tableofcontents

\section{Introduction}\label{sec:int}
The classical treatment of field theories with both bosonic and fermionic degrees of freedom was born from the need to develop phenomenologically reasonable quantum theories. Since a true classical treatment of fermions is arguably impossible,   by \emph{classical} we will really mean \emph{quasi--classical} in the sense that   both commuting and anticommuting degrees of freedom are allowed  in the theory.  Just to stress the importance  of quasi--classical theories, recall that without them one would have to   write down directly quantum theories   involving both fermions and bosons---a  presently impossible task.  Moreover, it is common in semiclassical treatments of quantum field theory to set the fermionic degrees of freedom to zero once the supersymemtric theory has been constructed: for example Witten \cite{Witten:1981} does exactly this when discussing instantons in a model of supersymmetric quantum mechanics. However, we know from the work of Akulov \& Duplij \cite{Akulov:1997} that this is not always the correct thing to do. Thus, having some handle on quasi--classical backgrounds is necessary in general. Another example is provided by gauge theories, which via  perturbative methods  lead to  anticommuting fields --- the  well-known Faddeev--Popov ghosts --- even if no fermions are present in the theory.\par
In this paper we re--examine  \emph{supermechanics} from a supergeometric point of view, which we understand  as \emph{mechanics on supermanifolds}: this is quite independent of the notion of supersymmetry. In this perspective,  dynamical variables   depend on time only, though defining curves on a supermanifold needs  particular care. We will employ Grothendieck's functor of points, and the corresponding description of mapping spaces between supermanifolds. In particular,  a curve on a supermanifold will be an $S$--curve \cite{Bruce:2014}, which we can think of as a family of curves parametrised by an arbitrary supermanifold. In spite of its  abstractness, such a categorical approach brings some rigour  to the more informal handling of odd fields as found throughout the physics literature, see for example \cite{Freed:1999}. \par
We take the point of view that the only physically meaning parameter for a curve on a supermanifold is time. However,   a curve na\"{\i}vely defined  as a (categorical) morphism between supermanifolds $\mathbb{R} \longrightarrow M$ clearly `misses' the odd (i.e.,  anticommuting) part of the supermanifold $M$, since the categorical maps between supermanifolds must preserve the Grassmann parity. Thus  only   the topological points of the underlying manifold can be reached by such a narrow notion of a curve. In order to overcome this obstacle, without having  to introduce some notion of `super--time', we are forced to consider the trajectories of a quasi--classical system as being $S$--curves (or $S$--paths).\par
The necessity of parametrising curves by external odd constants was long known in the  physical literature  \cite{Berezin:1977, Casalbuoni:1976}, though,    of course, phrased differently as requiring both the fermionic and bosonic degrees of freedom to take their values in some chosen Grassmann algebra (for certain models \cite{Junker:1996,Junker:1994, Junker:1995,Manton:1999} this may even be chosen to consist of two generators only). A drawback of   these earlier works is the lack of a proper discussion about the  separation of the true dynamical degrees of freedom from the external parametrisation. 
 A first clue on the functorial behaviour of curves in supermechanics can be found in the   work of Heumann \& Manton \cite{Heumann:2000} who discuss, in the context of a particular model of supersymmetric mechanics (which is related to Witten's model \cite{Witten:1981}), general Grassmann algebra valued equations of motion.  Indeed, they show that, for a large class of potentials in their model, exact solutions to the equations of motion can be found without specifying exactly the Grassmann algebra that everything takes values in.  From the $S$--points perspective,  this is not surprising: the Grassmann algebra used to parametrise the dynamics should not play a vital r\^ole, other than ensuring that the encountered expressions are of the right Grassmann parity.   Even if we will use all supermanifolds as our parametrisations, and not just Grassmann algebras, we stress   that (finite dimensional) Grasssmann algebras are enough to parametrise a supermaifold via the functor of points, according to   Schwarz \& Voronov \cite{Shvarts:1984,Voronov:1984}.\par
The notion of an $S$--curve and how to use them to define higher--order tangent bundles was the subject of \cite{Bruce:2014}.  Armed with a robust and appropriate formalism, the next step we are going to take is to apply  Tulczyjew's geometric formalism  \cite{Tulczyjew:1977} to  supermechanics. We will stick with the Lagrangian side of the \emph{Tulczyjew triple}, and geometrically define the phase dynamics (the Euler--Lagrange equations are  a sort of `shadow' of the phase dynamics) and solutions thereof. In particular, we work with autonomous Lagrangians. Furthermore, we will restrict our attention to Grassmann even Lagrangians (see Remark \ref{rem4Janusz}).\par
It is worth pointing out that Tulczyjew's geometric formulation of classical mechanics   deals with both nondegenerate and degenerate Lagrangians in a unified way. Recall that, in general, nondegnerate Lagrangians with odd degrees of freedom lead to quantum theories with negative norm states --- the `bad ghosts'. For the case of gauge--fixed Lagrangians with Faddeev--Popov  ghosts (`good ghosts'), which are constructed to be nondegenerate, the ghosts are unphysical and are removed from the theory using the BRST operator. However, if we want to consider physical fermions then we have to avoid quantum states with negative norm in another way. Typically, physically interesting Lagrangians with odd degrees of freedom are linear in first derivatives of the odd degrees of freedom and thus are degenerate (or singular in the language of Dirac and Bergmann). However, such degenerate Lagrangians lead to quantum theories that are free of negative norm states.  Hence, a common geometric framework for  both singular and regular Lagrangian is of paramount importance for supermechanics.\par
Our intention is to show how to define \emph{geometrically} and \emph{rigorously} the phase dynamics of a supermechanical system  as a sub-supermanifold of $\sT \sT^{*}M$. Generally  we will be considering implicit phase dynamics, i.e., the dynamics is not given by the `image' of an even vector field on $\sT^{*}M$. We then define solutions to the phase dynamics in terms of $S$--curves on $\sT^{*}M$ and their tangent prolongations. We view finding solutions for particular models as a secondary question, which will require ad hoc  methods (see \cite{Heumann:2000,Junker:1996,Manton:1999} for examples).  The primary aim of this paper is to find the proper (and minimalistic) geometric framework for supermechanics.\par
Extending the  methods of geometric mechanics to supermechanics is of course not a new idea. The first work in this direction that we are aware of is Ibort \& Mar\'{\i}n-Solano \cite{Ibort:1993}. We must also mention Cari\~{n}ena \& Figueroa, who  in a series of papers that include \cite{Carinena:1994,Carinena:1997,Carinena:2003} also investigated extending various ideas from geometric mechanics to supermechanics. However, the approaches of these works are different than the one explored here, and they do not properly explain the dynamics in terms of a suitable generalisation of a curve on a supermanifold. We just  follow the classical ideas of curves and their tangent prolongations  very closely, and  we manage to avoid the unnatural trick of enlarging   (the total space of) the tangent bundle of a supermanifold. By dealing with curves that depend on time and not `super-time' we have well--defined tangent (and higher tangent) prolongations. \par
 Supermechanics cannot be expected to describe  directly our macroscopic world in some approximate sense as standard classical mechanics does. In particular the classical limit is understood as the limit for which Planck's constant tends to zero while at the same time allowing very large quantum numbers. The problem is that for anticommuting degrees of freedom that obey the spin--statistic theorem, as all physical fermions must do, the Pauli exclusion principle forbids arbitrary large quantum numbers. This means that even in the quasi--classical limit where we allow anticommuting objects, we cannot really understand the systems in the same light as classical mechanics. In further generality, there is no reason to expect supermanifolds to play such a fundamental and direct r\^{o}le in understanding nature  as manifolds do. Nonetheless, supermanifolds are vital in bridging the gap between the truly classical world and the quantum one.  
\subsubsection*{Main results} In this paper we
\begin{enumerate}
\item define the phase dynamics of a supermechanical (Lagrangian) system in the spirit of Tulczyjew.
\item define the solutions of the phase dynamics in terms of $S$--curves (or $S$--paths).
\item present infinitesimal (super)symmetries of the phase dynamics as vector fields on the phase space of the supermechanical system.
\end{enumerate}
Needless to say, the Euler--Lagrange equations for a supermechanical Lagrangian can of course be derived using a formal variation with respect to the even and odd degrees of freedom in the theory. The main novelty here is the geometric method we apply via $S$--curves: in doing so we avoid having to deal with the calculus of variations.  Actually, there is no clear and general consensus about the notion of an ordinary differential equation and their solutions on a supermanifold. Indeed, different ideas come from different understandings of the classical situation. For the present purposes, a  differential equation (explicit or implicit) is a substructure of the tangent bundle of a supermanifold,   understood in terms of the functor of points, and a solution is defined in terms of $S$--curves and their tangent lift. In conclusion, one just needs to employ the functor of points and internal Homs and to  follow  closely the classical geometric notions.
\subsubsection*{Arrangement} Although we assume the reader has some familiarity with the theory of supermanifolds, in Section \ref{sec:pre} we review the basic theory  as needed in the remaining sections of this paper. In particular we sketch the notion of the functor of points, generalised supermanifolds as functors and the generalised supermanifold of maps between given supermanifolds. In this section we also briefly recall the notion of $S$--curves as given in \cite{Bruce:2014}. In section \ref{sec:Lag} we define first--order mechanical Lagrangians on a supermanifold and show how to geometrically define their phase dynamics, in close analogy with  Tulczyjew's framework for classical mechanics. In Section \ref{sec:examples} we present some simple supermechanical systems in order to illustrate our main constructions. We end this paper with some brief comments in Section \ref{sec:con}. We include two short appendice: Appendix \ref{AppTM} on the tangent and cotangent bundles of supermanifolds, and Appendix \ref{AppDVBs} on the canonical diffeomorphism $\alpha: \sT \sT^{*}M \rightarrow \sT^{*}\sT M$.
\section{Preliminaries on supermanifolds}\label{sec:pre}
\subsection{Supermanifolds and their morphsims}
 We will follow the `Russian school' and denote by $\catname{SM}$ the category of real finite--dimensional supermanifolds understood as locally superringed spaces \cite{Carmeli:2011,MR1701597,Manin:1997,Varadarajan:2004}. A \emph{supermanifold}   $M$ is understood to be defined by its structure sheaf $(|M|,\mathcal{O}_{M})$, where $|M|$ is the \emph{reduced manifold} (or \emph{body}) of $M$. By an \emph{open superdomain} $U$ of $M$, we mean an open domain in $  |M|$, together with the corresponding restriction of the structure sheaf. Sections of the structure sheaf will be called \emph{functions} on $M$ and the set of all functions will be denoted by $C^{\infty}(M)$.\par
A \emph{morphism} between supermanifolds $\psi : M \longrightarrow N$ is a pair of morphisms $(|\psi|, \psi^{*})$ where$ |\psi| : |M|\longrightarrow  |N|$ is a continuous map and $\psi^{*}: \mathcal{O}_{N} \longrightarrow \mathcal{O}_{M}$ is a sheaf morphism above $|\psi|$. The set of morphisms between a pair of supermanifolds will be denoted by $\Hom_{\catname{SM}}(M,N) := \Hom(M,N)$. Note that these categorical morphisms necessarily preserve the $\mathbb{Z}_{2}$-grading.  For simplicity we will assume that all the supermanifolds $M$ we work with are connected, that is $|M|$ will be connected.\par
Recall that  as supermanifolds are locally diffeomorphic to  $\mathbb{R}^{p|q}$ (for some $p$ and $q\in \mathbb{N}$), we can work with \emph{local coordinate patches} on supermanifolds. We will typically group even (denoted by $x^\mu$) and odd coordinates (denoted by $\theta^\alpha$), together and present them as $(x^{a})$ and denote the Grassmann parity by $\widetilde{x^{a}} := \widetilde{a} \in \{0,1\}$. The coordinates of a supermanifold are commuting in the graded sense. What makes supermanifolds so workable is that morphisms  and changes of local coordinates can be written in exactly the same way as one would in the category of smooth manifolds. Note that in general one allows changes of coordinates that mix even and odd coordinates provided they preserve the Grassmann parity. In particular, although we have the famous Gaw\c{e}dzki--Batchelor theorem \cite{Batchelor:1979,Gawedzki:1977} --- every (real) supermanifold is noncanonically isomorphic to a bundle of Grassmann algebras --- supermanifolds do not in general have a fibre bundle structure over a manifold. That is, we will not simply be working with  `Gaw\c{e}dzki--Batchelor models', but rather with the full category of supermanifolds. Also, note that we have a bijection between $C^{\infty}(M)$ and $\Hom(M,\mathbb{R}^{1|1})$ simply given by $f \longmapsto (t \circ f, \tau \circ f)$, where we have chosen global coordinates $(t, \tau)$ on $\mathbb{R}^{1|1}$. \par
Tensor and tensor--like objects, for example vector fields, also naturally carry Grassmann parity and we   use the  notation  `tilde', as we have done for coordinate functions, to denote their parity. By an \emph{even} or \emph{odd} object we mean an object that carries Grassmann parity of zero or one respectively. \par

 It is important to note that supermanifolds are not truly set--theoretical objects and that they represent a class of (mildly) noncommutative geometries. The only `true points' of a supermanifold $M$ are the points of the reduced manifold $|M|$, usually referred to as the \emph{topological points} of $M$. One should think of the points of a supermanifold as the topological points of the reduced manifold surrounded by---so to speak---anticommutative `fluff' that cannot be non--trivially localised. Thus, one will have to take care with generalising point--wise notions from the category of manifolds to the category of supermanifolds. The functor of points, which we will discuss next provides a way of recovering some classical intuition.

\subsection{The functor of points} We will need a more categorical approach to supermanifolds and in particular the notion of the \emph{functor of points} (discussed here) and the \emph{internal Homs} (discussed in \ref{subIntHoms} below). The \emph{$S$--points} of a supermanifold $M$ are elements in the set $\Hom(S,M)$, where $S$ is some arbitrary supermanifold. That is, one can view a supermanifold as a functor
\begin{eqnarray}
M  : \catname{SM}^{\rmo}  &\longrightarrow&  \catname{Set}\, ,\label{eqSTAR}\\
S &\longmapsto &\Hom(S,M) := M(S)\, ,\nonumber
\end{eqnarray}
which is an example of the Yoneda embedding (see, e.g., \cite{MacLane1998}, III.2). Via Yoneda's lemma, we can identify a supermanifold  $M$ with the functor \eqref{eqSTAR}, in such a way that   morphisms between supermanifolds correspond to natural transformations between the corresponding functors. Such natural transformations amount to maps between the respective sets of $S$--points. Informally, one can think about the $S$--points of $M$ as being parametrised by \emph{all} supermanifolds.

\begin{remark}\label{rem21}There is a nice particle physics analogy here with the functor of points, which we think is originally due to Ravi Vakil. Imagine that we   want to discover all the properties of an unknown particle, and to this end we fire at it \emph{all} possible test particles with a wide range of energies.  By observing the results of the  interactions   with \emph{all} the test particles over \emph{all} energy scales, we can deduce \emph{all} the properties of the unknown particle and thus characterise it completely. The functor of points allows us to `probe' a given supermanifold by observing how, as a functor, it `interacts' with all `test supermanifolds' (the latter being often referred to as \emph{parametrisations}). However life is a little simpler than that: due to the  work of Schwarz and Voronov \cite{Shvarts:1984,Voronov:1984} it is known that it is actually sufficient to probe the supermanifold under study  with  parametrisations of the form $\mathbb{R}^{0|q}$ $(q \geq 1)$ only, i.e., the  simpler Grassmann algebras. This is  not unlike a typical particle physics   experiment, where    we can gain enough information to understand the standard model, even without  colliding all possible particles together nor   exploring all energies.
\end{remark}

\begin{remark}\label{rem22}
As the functor of points involves maps between finite--dimensional supermanifolds, one can consider $S$--points locally via coordinates.  In particular, if we employ some coordinate system 
\begin{equation}\label{eqLocCoord}
(x^{a}) = (x^{\mu},\theta^{\alpha})
\end{equation}
 on a local superdomian $U$ of $M$, then the $S$--points are   specified by systems of functions $(x_{S}^{\mu}, \theta_{S}^{\alpha})$, where $x^{i}_{S}$ is a collection of even functions on $S$ and similarly $\theta_{S}^{\alpha}$ is a collection of odd function on $S$. As the
supermanifold $S$ is chosen arbitrarily dependence on the local coordinates of S will not explicitly be presented.
\end{remark}
Given a morphism 
\begin{equation}\label{eqChangParam}
\psi \in \Hom(P, S)
\end{equation}
 between supermanifolds $P$ and $S$ we have an induced map of sets
\begin{eqnarray}
\Psi   :  M(S)& \longrightarrow& M(P)\, ,\label{eqSTARSTAR}\\
m & \longmapsto& m \circ \psi\, ,\nonumber
\end{eqnarray}
 for all $m \in M(S)$. Thus we speak of the \emph{functor of points} \eqref{eqSTAR}--\eqref{eqSTARSTAR}. The morphism \eqref{eqChangParam} is usually referred to as a \emph{change of parametrisation}.

\begin{warning}
On occasion we will use an abuse of set--theoretical notation and terminology. As we are   not  really dealing with sets, our language and notation maybe somewhat inappropriate. Thus, we will always consider the meaning  of set--theoretical notions in terms of $S$--points and not topological points. 
\end{warning}

\subsection{Internal Homs and generalised supermanifolds}\label{subIntHoms}
 A \emph{generalised supermanifold} is an object in the functor category $\widehat{\catname{SM}} := \catname{Fun}(\catname{SM}^{\rmo}, \catname{Set})$, whose objects are functors from $\catname{SM}^{\rmo}$ to the category $\catname{Set}$ and whose morphisms are natural transformations \cite{Shvarts:1984}. Note that this functor category contains $\catname{SM}$ as a full subcategory via the Yoneda embedding \eqref{eqSTAR}. One passes from the category of finite--dimensional supermanifolds to the larger category of generalised supermanifolds in order to understand the \emph{internal Homs} objects, henceforth denoted by $\InHom$. In particular there always exists a generalised supermanifold such that the so--called \emph{adjunction formula} holds
\begin{equation*}
\InHom(M,N)(\bullet) := \Hom(\bullet \times M,N) \in  \widehat{\catname{SM}}\, .
\end{equation*}
Heuristically, one should think of enriching the morphisms between supermanifolds to now have the structure of a supermanifold, however to understand this one passes to a larger category. In essence we will use the above to define what we mean by a \emph{mapping supermanifold} and will probe it using the functor of points. We will refer to `elements' of a mapping supermanfold as \emph{maps} reserving morphisms for the categorical morphisms of supermanifolds.\par
A generalised supermanifold is \emph{representable} if it is naturally isomorphic to a supermanifold in the image of the Yoneda embedding \eqref{eqSTAR}.

\begin{example}
 It is easy to see that $\Hom(\{\emptyset\},M) = |M|$, while $\InHom(\{\emptyset\},M) = M$.
\end{example} 

\begin{example}\label{exAntTangBund}
  Another well--known example of a representable generalised supermanifold is the antitangent bundle $\InHom(\mathbb{R}^{0|1},M) = \Pi \sT M$. In fact it is well--known that the generalised supermanifolds $\InHom(\mathbb{R}^{0|p}, M)$  are representable for all $p \in \mathbb{N}$.
\end{example}

The composition of maps between supermanifolds can be thought of in terms of a natural transformation
\begin{equation}
\underline{\circ} : \InHom(M,N) \times \InHom(N,L) \longrightarrow \InHom(M,L)\, ,\label{eq5Star}
\end{equation}
defined, for any $S \in \catname{SM}$,  by
\begin{eqnarray}
 \Hom(S\times M, N) \times \Hom(S\times N,L)& \longrightarrow& \Hom(S\times M, L)\, ,\label{eq5Star5Star}\\
  (\Phi_{S}, \Psi_{S})& \longmapsto &(\Psi \underline{\circ} \Phi)_{S}:= \Psi_{S} \circ (\Id_{S}, \Phi_{S})\circ (\Delta, \Id_{M})\, ,\nonumber
\end{eqnarray}
and then letting $S$ vary   over all supermanifolds. Here $\Delta:S\longrightarrow S\times S$ is the diagonal of $S$ and $\Id_S:S\longrightarrow S$ is the identity.
\subsection{Curves on supermanifolds}  The notion of a \emph{curve} on a supermanifold requires some delicate handling and we are forced to adopt a very categorical framework. This may at first seem somewhat disconnected from the classical notion of a curve on a manifold, but we will comment on this shortly.

\begin{definition}
A  \emph{curve on a supermanifold} $M$ (\emph{$S$--curve}, for brevity)  is a functor $\gamma \in \InHom(\mathbb{R}, M)$.  
\end{definition}

To make proper sense of this we `probe' the generalised supermanifold of curves using the functor of points \eqref{eqSTAR}--\eqref{eqSTARSTAR}. Given any supermanifold $S$ an $S$--curve $\gamma$ assigns a morphism
\begin{equation*}
\gamma_{S} \in \InHom(\mathbb{R}, M)(S)= \Hom(S \times \mathbb{R}, M)\, .
\end{equation*}
To any $\psi \in \Hom(P,S)$  we have an induced a morphism $\overline{\psi}: \InHom(\mathbb{R}, M)(S) \longrightarrow \InHom(\mathbb{R}, M)(P)$ given by
\begin{equation*}
\gamma_{S} \longmapsto \gamma_{P} := \gamma_{S} \circ (\psi, \Id_{\mathbb{R}})\, .
\end{equation*}
That is, we should consider an $S$--curve $\gamma$ as a family $\{ \gamma|_t \}_{t\in\mathbb{R}}$ of functors from the (opposite) category of supermanifolds to sets that is parametrised by the real line, viz.
\begin{equation*}
\gamma|_t :S\in \catname{SM} \longmapsto \gamma_S|_t \in \Hom(S,M)\, .
\end{equation*}
Alternatively, by exchanging the r\^{o}les of $S$ and $\mathbb{R}$, we may think of $\gamma$ as a family  $\{ \gamma|_S \}_{S\in\catname{SM}}$ of maps
\begin{equation}\label{eqGammaS}
\gamma|_S :t\in \mathbb{R} \longmapsto \gamma|_{S\times\{t\}} \in \Hom(S,M)\, .
\end{equation}
That is, since $\gamma|_{S\times\{t\}}$ depends on $S$, $\gamma|_S:S\times\mathbb{R}\longrightarrow M$ is a family of curves in $M$ parametrised by  $S$.
\begin{warning}
We will refer to \emph{both} $\gamma$ and $\gamma_{S}$, where $S$ is an arbitrary supermanifold, as $S$--curves. The context should make the meaning clear.
\end{warning}

In practice we may also consider $S$--paths by replacing $\mathbb{R}$ with an  open interval. For a chosen, but arbitrary supermanifold $S$,   an $S$--curve is a family \eqref{eqGammaS} of   curves   parametrised by $S$: this external parametrisation provides precisely the `extra oddness' that is needed so that the curve does not miss the odd directions of $M$. The image set of an $S$--curve  $\gamma_{S}(\mathbb{R}) \subset M(S)$ is a collection of $S$--points of $M$. This is close to the classical notion of a curve which traces out the topological points of a manifold. In the supercase, $S$--curves trace out $S$--points of a supermanifold.  In particular, we can always consider $\gamma|_{t_{0}} : \InHom(\mathbb{R}, M) \longrightarrow M$ as a natural transformation for any  $t_{0} \in \mathbb{R}$ and making note that $M = \InHom(\{ \emptyset\},M)$. The statement that an $S$--curve passes through an $S$--point $m \in M(S)$ means $\gamma_{S}(0)=m$.\par

As $S$--curves, once a supermanifold $S$ has been chosen,  are standard categorical morphisms between supermanifolds, we can describe them locally in terms of coordinates. Let us consider some local coordinate system $(x^{a}) =  (x^{\mu}, \theta^{\alpha})$ on a local superdomain  $U$ of  $M$. We also employ   the global standard coordinate system $(t)$ on $\mathbb{R}$, and bear in mind Remark \ref{rem22}.  Then any $S$--curve can be written locally as
$$(x^{a}\circ \gamma_{S})(t) = (x^{\mu}_{S}(t), \theta^{\alpha}_{S}(t)),$$
which is a system of even and odd functions on $S \times \mathbb{R}$.   We stress that typographically adding a subscript `$S$' to our local expressions is more than just a formal trick, but has genuine meaning within the context of the functor of points (see Example \ref{exExplicitDE} later on, where we examine  the functorial behaviour of the $S$--curves satisfying a certain equation).\par
Once a parametrising supermanifold $S \in \catname{SM}$ has been chosen, $S$--curves are  formally reminiscent of \emph{homotopies} of standard classical curves. In particular we use the extra `oddness' provided by $S$ to `push' or `deform' the curve off the reduced manifold $|M|$ and so detect the odd directions of the supermanifold $M$.\par
To see this, it suffices to consider $S = \mathbb{R}^{0|p}$ (see Remark \ref{rem21}), and  suppose that  $M$ is  equipped with local coordinates $(x^{\mu}, \theta^{\alpha})$. Then any $S$--curve $\gamma$  in the `even directions' locally looks like
\begin{equation}
x^{\mu} \circ \gamma_{\mathbb{R}^{0|p}}(t) =  x^{\mu}(t) + \sum_{n\ \textnormal{even}} \frac{1}{n!} \zx^{i_{1}}\zx^{i_{2}} \cdots \zx^{i_{n}}x^{\mu}_{i_{n}i_{n-1} \cdots  i_{1}}(t)\, .
\end{equation}
 This means that the components of $\gamma$ in the  `even directions' are described by a collection of standard curves on $|M|$. These curves   have an external parametrisation, but they do not feel any `extra structure' of the reduced manifold, which is nothing more than a smooth manifold.\par
In the `odd directions' we have
\begin{equation}
\theta^{\alpha} \circ \gamma_{\mathbb{R}^{0|p}}(t) =  \sum_{n\ \textnormal{odd}} \frac{1}{n!}  \zx^{i_{1}}\zx^{i_{2}} \cdots \zx^{i_{n}}\theta^{\alpha}_{i_{n} i_{n-1}  \cdots i_{1}}(t)\, .
\end{equation}
  However, the components of $\gamma$ in the `odd directions' do not have any interpretation as `standard curves'  on a purely odd supermanifold. Heuristically, the above shows the need for odd parameters when dealing with curves on a supermanifold.  %
\begin{remark}
Several other notions of curves have appeared in the literature. For example, a supercurve is often understood as being in $\Hom(\mathbb{R}^{1|1},M)$ (see, e.g., \cite{Garnier2012,Goertsches2008,OngayLarios:1992}) or $\InHom(\mathbb{R}^{1|1},M)$ (see, e.g., \cite{Dumitrescu2008}). However, as we do not wish to try to make sense of an odd component of time we reject these notions as suitable for quasi--classical mechanics. An argument \emph{for} the notion of `super--time' can be found in \cite{Salgado:2009}, together with many references of earlier related works. Of course, we do not reject `super-space methods' as a powerful tool in constructing supersymmetric theories, but nonetheless in mechanics, time is the only physically meaningful parameter.  Moreover, in \cite{Bruce:2014} it was shown that $S$--curves are necessary for a  kinematic  definition of the (total spaces of the) tangent bundle and the $k^\textrm{th}$ order tangent bundle of a supermanifold. 
\end{remark}
%
%
We stress that defining a  curve as a functor is vital for our understanding of differential equations and the notion of dynamics.  Our philosophy is that one can never really understand a curve as  being parametrised by a single chosen supermanifold (or Grassmann algebra, see Remark \ref{rem21}), but rather we use \emph{all} supermanifolds as parametrisations at the `same time': this is reflected in the functorality of the constructions (see also Example \ref{exExplicitDE} and Proposition \ref{propFunctoriality} later on).
\subsection{Superdiffeomorphism groups} The idea of the \emph{superdiffeomorphism group} of a supermanifold $M$, which we will denote as $\InDiff(M)$, is intuitively clear (see, e.g., \cite{Sachse2011}). We restrict our attention to the subfunctors of $\InHom(M,M)$ that consist of maps that are invertible with respect to the composition \eqref{eq5Star}--\eqref{eq5Star5Star}. To be more clear on this, we first need the unit element.
\begin{definition}
The \emph{unit element} of $\InHom(M,M)$ consists of the subfunctor given by
\begin{eqnarray*}
\textnormal{id}: \catname{SM}^{\rmo}  &\longrightarrow &\catname{Set}\, ,\\
  S &\longmapsto&  \textnormal{id}_{S} := \Id_{M} \circ \textnormal{prj}_{M}\, .
\end{eqnarray*}
\end{definition}
Here $\textnormal{prj}_{M} : S \times M \rightarrow  M$ is the  obvious projection.  It is a straightforward exercise to show that $\textnormal{id}$ has the expected properties under the composition. 
\begin{definition}
A map $\Phi:  M\longrightarrow M$ is said to be \emph{invertible} if there exists another map $\Phi^{-1}$ such that
\begin{equation}\label{eqHeart}
(\Phi\underline{\circ} \Phi^{-1})_{S} = (\Phi^{-1}\underline{\circ} \Phi)_{S} = \textnormal{id}_{S}\, ,\end{equation}
for all parametrisations $S \in \catname{SM}$.
\end{definition}

\noindent There is no need for the inverse of a map to exists, but when it does it can be shown to be unique.  Informally we will also refer to $\Phi_{S}$ as \emph{invertible} remembering that invertibility is in the sense of \eqref{eqHeart}.\par
We are now ready to define what we mean by a superdiffeomorphism group.
\begin{definition}
The \emph{superdiffeomorphism group} of a supermanifold $M$, which we denote as $\InDiff(M)$, consists of all the subfunctors of $\InHom(M,M)$ that are invertible with respect to the composition $\underline{\circ}$ defined by \eqref{eq5Star}. In other words,
\begin{equation}\label{eqDefSupDiffGroup}
 \InDiff(M)(S) := \left\{\Phi_{S} \in \InHom(M,M)(S)~ | ~~ \Phi_{S}~ \textnormal{is invertible}  \right\}\, .
\end{equation}
\end{definition}
It is a simple exercise to see that we can (and should) consider the superdiffeomorphism group of a supermanifold as a functor 
$$\InDiff(M): \catname{SM}^{\rmo} \longrightarrow \catname{Grp}\, ,$$
where we denote the category of set--theoretical groups as $\catname{Grp}$. Changes of the parametrising supermanifolds lead to group homomorphisms, as they should. \par
Note that we do not spell out the locally ringed space structure of superdiffeomorphism groups, we will only need to consider them as generalised supermanifolds.  Indeed, the most clear notion of a super Lie group is as a functor from the (opposite) category of supermanifolds to groups. \par
\begin{example}
 Superdiffeomorphism groups for purely odd supermanifolds are representable, and so are genuine Lie supergroups. For instance,   from Example \ref{exAntTangBund} it is clear that $\InHom(\mathbb{R}^{0|1},  \mathbb{R}^{0|1}) =  \Pi \sT \mathbb{R}^{0|1} \simeq \mathbb{R}^{1|1}$.   Then in order to constuct $\InDiff(\mathbb{R}^{0|1})$ we restrict to invertiable maps, which amounts to `removing the point zero' from $\mathbb{R}^{1|1}$. Thus $\InDiff(\mathbb{R}^{0|1}) \simeq \mathbb{R}^{1|1}_{\star} \subset \R^{1|1}$. Natural coordinates on $\InDiff(\mathbb{R}^{0|1})$ can be inhereted from $\mathbb{R}^{1|1}$ in the obvious way.  The group law, defined at the level of $S$--points is given by
\begin{equation}\label{eqExGroup}
(x_{S}, \theta_{S}) \cdot (x_{S}^{\prime},  \theta_{S}^{\prime}) = (x_{S} \: x_{S}^{\prime} , \theta_{S} + \theta_{S}^{\prime})\, ,
\end{equation}
\noindent where the multiplication and addition are understood as being in the algebra of functions on $S$. We leave it as an exercise to the reader to show that \eqref{eqExGroup} defines  indeed   a group structure.
\end{example} 
The \emph{Lie superalgebra} of  $\InDiff(M)$ is given by the Lie superalgebra of vector fields $(\Vect(M), [~,~])$, where the bracket is the standard graded Lie bracket of vector fields. As standard, one interprets the action of vector fields on tensors and tensor--like objects on a supermanifold via the Lie derivative as an infinitesimal action of the superdiffeomorphism group. Note that the difference with the  case  of classical manifolds  is that we have both even and odd vector fields, which are understood as derivations of the algebra of functions on the supermanifold in question. At the level of local coordinates on a supermanifold, given any (homogeneous) vector field $X \in \Vect(M)$ we have an associated (local) infinitesimal superdiffeomorphism
\begin{equation*}
x^{a} \longmapsto x^{a} + \epsilon X^{a}(x)\, ,
\end{equation*}
 where $\epsilon$ is an infinitesimal parameter of Grassmann parity $\widetilde{\epsilon} = \widetilde{X}$. As we have both even and odd vector fields to contend with, we are actually dealing with `supersymmetries'. \par

\subsection{Ordinary differential equations} We define a (first--order) \emph{ordinary differential equation on a supermanifold} $M$ as a subfunctor of the tangent bundle $\sT M$ (see Appendix \ref{AppTM})
\begin{equation}\label{eqDEfImpliDE}
D(S) \subset \sT M(S)\, .
\end{equation}
As it stands, our general understanding of an ordinary differential equation is as an \emph{implicit differential equation}. Indeed, we only   have a substructure of the tangent bundle. \par
Given any \emph{even} vector field  $X \in \Vect(M)$, we can consider it as a morphism of supermanifolds
$$X : M \longrightarrow \sT M,$$
 such that $\pi \circ X = \Id_{M}$, where    $\pi: \sT M \longrightarrow M$ is the  natural projection.   
 \begin{definition}\label{defExpliDE}
An \emph{explicit  ordinary differential equation} is the functor $D: \catname{SM}^{\rmo} \longrightarrow \catname{Set}$ given by
$$D(S) :=  \left\{ n \in \sT M(S)~ ~ | ~ ~\exists ~ m \in M(S) ~ ~ \textnormal{such that}~ n = X\circ m\right\}$$
and
\begin{eqnarray}
\psi^{D}: D(S) &\longrightarrow& D(P)\, ,\label{eqDefPsiD}\\
 n & \longmapsto& n' =  n \circ \psi\, ,\nonumber
\end{eqnarray}
  where $\psi$ is the change of parametrisation \eqref{eqChangParam}.
\end{definition}

\begin{remark}\label{rem4Janusz}
 As we are dealing with the (total space of the) tangent bundle of a supermanifold, we are discussing Grassmann even differential equations only. Moreover, it is clear that we cannot consider a Grassmann odd vector field or a inhomogeneous vector field as defining a morphism (in the category of supermanifolds) from $M$ to $\sT M$. Odd vector fields can be considered as morphism $M \rightarrow \Pi \sT M$, where $\Pi$ is the parity reversion functor (see also Example \ref{exAntTangBund}). Thus, a parallel theory of Grassmann odd differential equations can be constructed, though doing so is outside the scope of this paper.
\end{remark}

We then take an \emph{implicit ordinary differential equation} to be an ordinary differential equation that is \emph{not} an explicit ordinary differential equation. 
\begin{definition}\label{defSolutDE}
A \emph{solution to an ordinary differential equation} $D$, is an $S$--curve on $M$ such that
$$\st \gamma_{S}(t) \in D(S)\, ,$$
\noindent for all $t \in \mathbb{R}$.
\end{definition}
In the above definition, we have used the  obvious extension of the standard tangent prolongation of a curve; take the derivative with respect to time and remember that functions on any $S$ are constant (see \eqref{eqSTangentProlong}). 

\begin{example}[An implicit differential equation] Let $M = \R^{1|1}$, which we equip with global coordinates $(x, \theta)$. Then consider the sub--supermanifold  $ D \subset\sT \R^{1|1} \simeq \R^{2|2}$, which we equip with adapted coordinates $(x, \theta, \dot{x}, \dot{\theta})$, defined by the equation
\begin{equation*}
\dot{x} \:  - \: \theta \dot{\theta} = 0\, .
\end{equation*}
Thus, in our understanding we have an implicit ordinary differential equation: just on dimensional grounds we see that this equation cannot be explicit. Solution to this equation are $S$--curves $\gamma \in \InHom(\R, \R^{1|1})$ such that
\begin{equation}\label{eqEX1}
\frac{\rmd}{\rmd t} (x \circ \gamma_{S}(t)) - (\theta \circ \gamma_{S}(t)) \; \frac{\rmd}{ \rmd t}(\theta \circ \gamma_{S}(t)) = 0\, .
\end{equation}
Note that \eqref{eqEX1} cannot be simplified the standard way we are used to, i.e., by replacing $\theta \dot{\theta}$ by $\frac{1}{2}\frac{\rmd}{\rmd t} (\theta^2)$ along $\gamma_S$, due to the anticommuting nature of the Grassmann odd coordinates.
\end{example}
\begin{example}[An explicit differential equation]\label{exExplicitDE}
 Let $M:=\R^{0|2}$, which we provide  with global coordinates $(\theta_+,\theta_-)$, and consider the even vector field
  \begin{equation}\label{eqVFX}
X:=\theta_-\frac{\partial}{\partial\theta_+}-\theta_+\frac{\partial}{\partial\theta_-}\in \Vect(M)\, .
\end{equation}
Our differential equation is the one defining the trajectories of $X$, viz.
\begin{eqnarray}
\frac{d}{dt}(\theta_+\circ\gamma_S(t))&=&\theta_-\circ\gamma_S(t)\, ,\label{eqIntCurvX1}\\
\frac{d}{dt}(\theta_-\circ\gamma_S(t))&=&-\theta_+\circ\gamma_S(t)\, ,\label{eqIntCurvX2}
\end{eqnarray}
where $\gamma_S$ is the unknown $S$--curve. The system \eqref{eqIntCurvX1}--\eqref{eqIntCurvX2} can be solved via a direct integration, yielding to the global solution $\gamma_S$, unambiguously defined by the pull--backs of the global coordinates on $M$. That is,
\begin{eqnarray}
\theta_+\circ\gamma_S(t) &=& A_S\cos (t) + B_S\sin(t)\, , \label{eqSolIntCurvX1}\\
\theta_-\circ\gamma_S(t) &=& B_S\cos(t)-A_S\sin(t)\, ,\label{eqSolIntCurvX2}
\end{eqnarray}
with 
\begin{equation}\label{eqInPoints}
A_S=\theta_+\circ\gamma_S(0)\, ,\quad\textrm{and}\quad B_S=\theta_-\circ\gamma_S(0)\, .
\end{equation}
 It is worth observing that \eqref{eqSolIntCurvX1}--\eqref{eqSolIntCurvX2} may be trivial for particular choices of $S$. Specifically, if $S$ is a genuine manifold (for example just a point), then the left--hand sides vanish as they are the pull--backs of odd coordinates on $M$ to $S$.  In other words, $(A_S, B_S)$ is the initial $S$--point
\begin{equation}
\gamma_S(0):S\longrightarrow M\, .
\end{equation}
In the spirit of $S$--points, in order to analyse the solution \eqref{eqSolIntCurvX1}--\eqref{eqSolIntCurvX2}, one needs to `probe' it by picking a supermanifold $S$ and  we can confine ourselves to Grassmann algebras (see Remark \ref{rem21}). Let us illustrate this by choosing, e.g.,  $S=\R^{0|3}$, which is large enough to clarify the general situation.  \par
Fix global coordinates $(\zeta^1,\zeta^2,\zeta^3)$ on $\R^{0|3}$. Then the initial data \eqref{eqInPoints} are just   generic odd functions on $\R^{0|3}$, i.e., 
\begin{eqnarray*}
A_{\R^{0|3}}&=& a_1\zeta^1+a_2\zeta^2+a_3\zeta^3+a_4\zeta^1\zeta^2\zeta^3\, ,\\
B_{\R^{0|3}}&=& b_1\zeta^1+b_2\zeta^2+b_3\zeta^3+b_4\zeta^1\zeta^2\zeta^3\, .
\end{eqnarray*}
Then, by replacing the above data into \eqref{eqIntCurvX1}--\eqref{eqIntCurvX2}, we have
\begin{eqnarray}
\theta_+\circ\gamma_{\R^{0|3}}(t) &=& \zeta^1(a_1\cos(t)+b_1\sin(t))\label{eqThetaGammaR3-1}\\
&&+ \zeta^2(a_2\cos(t)+b_2\sin(t))\nonumber\\
&&+ \zeta^3(a_3\cos(t)+b_3\sin(t))\nonumber\\
&&+ \zeta^1\zeta^2\zeta^3(a_4\cos(t)+b_4\sin(t))\, ,\nonumber\\
\theta_-\circ\gamma_{\R^{0|3}}(t) &=& \zeta^1(b_1\cos(t)-a_1\sin(t))\label{eqThetaGammaR3-2}\\
&&+ \zeta^2(b_2\cos(t)-a_2\sin(t))\nonumber\\
&&+ \zeta^3(b_3\cos(t)-a_3\sin(t))\nonumber\\
&&+ \zeta^1\zeta^2\zeta^3(b_4\cos(t)-a_4\sin(t))\, .\nonumber
\end{eqnarray}
Let us now discuss some changes of parametrisation (see \eqref{eqChangParam}), namely,
\begin{equation}\label{eqDiagChangCoord}
\xymatrix{
\R^{0|3}\ar@{->>}[r]  &  \R^{0|2}\ar@{->>}[r]\ar[d]^{\psi}   &   \R^{0|1}  \ar@{->>}[r]    & \R^{0|0}\\
               &  \R^{1|1}\, , &&
}
\end{equation}
where the horizontal arrows are the natural projections, and $\psi$ is given by
\begin{equation}\label{eqChangeParamPsi}
(s,\zeta)\circ\psi = (\zeta^1\zeta^2,\zeta^1)\, ,
\end{equation}
with $(s,\zeta)$ global chart on the new `probe' $S=\R^{1|1}$.  
According to our functorial definition  \eqref{eqDEfImpliDE} of an implicit differential equation, dual to the diagram \eqref{eqDiagChangCoord} there is
\begin{equation}\label{eqDiagChangCoordStar}
\xymatrix{
D(\R^{0|3}) &  D(\R^{0|2})\ar@{_(->}[l]    &  D( \R^{0|1}) \ar@{_(->}[l]     & D(\R^{0|0})\ar@{_(->}[l] \\
               & D( \R^{1|1} )\ar[u]^{\psi^D}\, . &&
}
\end{equation}
Observe that $S$--points of $D(\R^{0|3}) $ are described by \eqref{eqThetaGammaR3-1}--\eqref{eqThetaGammaR3-2}. The $S$--points of  $D(\R^{0|2})$ are described by the very same equations, but now with $a_3=b_3=a_4=b_4=0$. The situation is for $D(\R^{0|1}) $ and $D(\R^{0|0})$ is clear. So, the horizontal arrows in \eqref{eqDiagChangCoordStar} are the obvious set--theoretical inclusions. \par
Now let us discuss the vertical arrow in \eqref{eqDiagChangCoord} and \eqref{eqDiagChangCoordStar}. With $S={\R^{1|1}}$, 
 the initial data \eqref{eqInPoints} are     generic odd functions on $\R^{1|1}$, that is
\begin{equation*}
A_{\R^{1|1}}=\zeta A(s)\, ,\quad B_{\R^{1|1}}=\zeta B(s)\, .
\end{equation*}
According, the equations \eqref{eqSolIntCurvX1}--\eqref{eqSolIntCurvX2} read
\begin{eqnarray*}
\theta_+\circ\gamma_{\R^{1|1}}(t) &=& \zeta(A(s)\cos (t) + B(s)\sin(t))\, ,\, \label{eqSolIntCurvX1bis}\\
\theta_-\circ\gamma_{\R^{1|1}}(t) &=& \zeta(B(s)\cos(t)-A(s)\sin(t))\, ,\label{eqSolIntCurvX2bis}
\end{eqnarray*}
and these are   $S$--points of $D( \R^{1|1} )$. 
Now we can describe   the map $\psi^D$ appearing in \eqref{eqDiagChangCoordStar}. According to the definition of $\psi^D$ (see \eqref{eqDefPsiD}), we have   that, by Taylor expansion of $A$ and $B$,
\begin{eqnarray*}
\theta_+\circ\gamma_{\R^{1|1}}(t)\circ\psi &=& \zeta^1(A(0)\cos (t) + B(0)\sin(t))\,  , \\
\theta_-\circ\gamma_{\R^{1|1}}(t)\circ\psi &=& \zeta^1(B(0)\cos(t)-A(0)\sin(t))\, , 
\end{eqnarray*}
and so
\begin{equation*}
 \zeta A(s)\circ\psi=\zeta^1 A(0)\, ,\quad \zeta B(s)\circ\psi=\zeta^1 B(0)\, .
\end{equation*}
  This way, we get a solution \eqref{eqSolIntCurvX1}--\eqref{eqSolIntCurvX2} (i.e., when $S=\R^{0|2}$, with global coordinates $(\zeta^1,\zeta^2)$) with $A(0)=a_1$, $B(0)=b_1$, all other coefficients being zero. So, we have explained how $\psi^D$ acts on an $S$--point of $D( \R^{1|1} )$ and gives out an $S$--point of $D( \R^{0|2} )$.
  \end{example}
The above example illustrates the functorial properties of solutions to ordinary differential equations on supermanifolds, this feature is essential in our general understanding of differential equations and their solutions. We will concentrate on Lagrangian mechanics and phase dynamics in the next section.

\section{The Lagrangian and phase dynamics}\label{sec:Lag}
\subsection{The Lagrangian and its evaluation on curves} We understand a (first--order time--independent) Lagrangian
\begin{equation}\label{eqLag}
L : \sT M \longrightarrow \mathbb{R}
\end{equation}
 to be a morphism of supermanifolds, i.e., an \emph{even} function $L \in C^{\infty}(\sT M)$ on the total space $\sT M$ of the tangent bundle of a supermanifold $M$.
 
\begin{example}[see \cite{Casalbuoni:1976,Junker:1995}]
A simple non--trivial  Lagrangian \eqref{eqLag} with  $M = \mathbb{R}^{1|2}$, which we equip with (global) coordinates $(x, \theta_{+}, \theta_{-})$ is
$$L = \frac{1}{2} \dot{x}^{2} -V(x) + \frac{1}{2}(\dot{\theta}_{+} \theta_{+} - \dot{\theta}_{-}\theta_{-}) - W(x)\theta_{+}\theta_{-}\, ,$$
where $U(x)$ and $V(x)$ are smooth functions in the even variable only.
\end{example}
 
In order to evaluate   \eqref{eqLag}   along an $S$--curve
\begin{equation}\label{eqSCurvLagr}
\gamma_{S} \in \InHom(\mathbb{R}, M)(S)\, ,
\end{equation}
we first use   tangent prolongation
\begin{equation}\label{eqSCurvProlong}
\st \gamma_{S} \in \InHom(\mathbb{R},\sT M)(S),
\end{equation}
of \eqref{eqSCurvLagr},
which is defined similarly to the classical case by
\begin{equation}\label{eqSTangentProlong}
(x^{a}, \dot{x}^{b})\circ \st \gamma_{S}(t) =  \left(x^{a}\circ \gamma_{S}(t),~ \dot{x}^{b}\circ \frac{\rmd \gamma_{S}}{\rmd t}(t)  \right)\, ,
\end{equation}
where $(x^{a}, \dot{x}^{b})$ are the standard local coordinates on $\sT M$ induced from \eqref{eqLocCoord}.   
The curve \eqref{eqSCurvProlong} then `tracks out' $S$--points of $\sT M (S)$ as we vary $t \in \mathbb{R}$, while the curve \eqref{eqSCurvLagr} describes their projections on $M(S)$.\par
Now we can explain how  the evaluation of \eqref{eqLag} along \eqref{eqSCurvLagr} works: just think of the former as a   natural transformation (hopefully without too much of a notational clash)
\begin{equation}
L : \InHom(\mathbb{R},M) \longrightarrow \InHom(\mathbb{R}, \mathbb{R})\, ,
\end{equation}
 viz.
\begin{equation*}
\xymatrix{
\InHom(\mathbb{R},M)(S)\ar[r]^{\overline{\psi}} \ar[d]^{L_S}& \InHom(\mathbb{R},M)(P)\ar[d]^{L_P}\\
\InHom(\mathbb{R},\mathbb{R})(S)\ar[r]_{ {\psi^L}} &\InHom(\mathbb{R},\mathbb{R})(P)\, ,
}
\end{equation*}
where   $\psi^{L}$ is defined by
\begin{equation*}
L \circ \st \gamma_{S} \longmapsto L \circ \st \gamma_{P} := (L \circ \st \gamma_{S} )\circ (\psi, \Id_{\mathbb{R}})\, ,
\end{equation*}
and $\psi$ is the change of parametrisation \eqref{eqChangParam}. 
Then the evaluation of $L$ on $\gamma$ consists in applying $L_S$ to $\gamma_S$, which in fact yields to a parametrised  family  
\begin{equation*}
L_{S}(\gamma(t)) := L \circ \st \gamma_{S}(t) : S \times \mathbb{R} \rightarrow \mathbb{R}\, 
\end{equation*}
 of Lagrangians. In short,  we must always think in terms of families.
 \begin{remark}\label{remREAL}
In physics one insists that the Lagrangians be real. However, this notion is dependent on the reality condition one uses, and in particular the definition of complex conjugation of odd variables. Thus, one may need to  include the complex unit in the local expressions. However, when dealing with examples, we will consider Lagrangians that are manifestly real.
\end{remark}
\subsection{The phase dynamics} Given the Lagrangian \eqref{eqLag}, we can define a morphism 
\begin{equation*}
d L : \sT M \longrightarrow \sT^{*}\sT M\, ,
\end{equation*}
which in local coordinates is given by 
\begin{equation}\label{eqEqDiffLagr}
p_{a}\circ dL = \frac{\partial L}{\partial \dot{x}^{a}} \, ,\quad  \dot{p}_{b} \circ dL =  \frac{\partial L}{\partial x^{b}}\, .
\end{equation}
In complete agreement with the classical case \cite{Tulczyjew:1977}, we have a diffeomorphism of double (super) vector bundles
\begin{equation}\label{eqDiffAlpha}
\alpha : \sT \sT^{*}M \longrightarrow \sT^{*} \sT M\, .
\end{equation}
In the adapted coordinates 
$(x^{a}, \dot{x}^{b}, \dot{p}_{c}, p_{d})$ on $\sT^{*}\sT M$  and 
\begin{equation}\label{eqLocIndCoord}
(x^{a}, p_{b}, \dot{x}_{c}, \dot{p}_{d})
\end{equation}
 on $\sT \sT^{*} M$, induced from \eqref{eqLocCoord}, we can rewrite \eqref{eqDiffAlpha} as
\begin{equation}\label{eqDiffAlphaLOC}
\alpha(x,p, \dot{x}, \dot{p}) = (x, \dot{x}, \dot{p}, p)\, .
\end{equation}
The diffeomorphism \eqref{eqDiffAlpha}, whose local expression \eqref{eqDiffAlphaLOC} does not suffer from any additional factor with respect to the well--known classical case (see, e.g., \cite{Voronov2002}, Theorem 7.1 and/or Appendix \ref{AppDVBs}), is what is needed to define the   \emph{Tulczyjew differential}  
\begin{equation}\label{eqTulczDiff}
\mathcal{T}L := \alpha^{-1} \circ d L\, .
\end{equation}
In local coordinates, \eqref{eqTulczDiff}   amounts to
\begin{equation}\label{eqTulczDiffLOC}
(x^{a}, p_{b}, \dot{x}^{c}, \dot{p}_{d})\circ \mathcal{T}L = \left(x^{a}, \frac{\partial L}{\partial \dot{x}^{b}}, \dot{x}^{c}, \frac{\partial L}{\partial x^{d}}\right)\, ,
\end{equation}
and it fits into the   following diagram
\begin{equation}
\xymatrix{
{\sT \sT^{*}M}\ar[r]^\alpha &   {\sT^{*} \sT M}\ar[d]_\tau\\
&{\sT M}\ar[ul]^{\mathcal{T}L}\, .\ar@/_1.0pc/[u]_{dL}
}
\end{equation}

\begin{definition}\label{defPhDyn}
The \emph{phase dynamics} $\mathcal{D}$ associated with the Lagrangian \eqref{eqLag}, is a functor   
\begin{equation}\label{eqPhDyn}
\mathcal{D} : \catname{SM}^{\rmo} \longrightarrow \catname{Set}
\end{equation}
defined by
\begin{equation}\label{eqDefDS}
\mathcal{D}(S) :=  \mathcal{T}L(\sT M(S)) \subset \sT \sT^{*}M(S)\, 
\end{equation}
and
\begin{eqnarray*}
 \psi^{\mathcal{T}L}: \mathcal{D}(S) &\longrightarrow &\mathcal{D}(P)\, ,\\
  \mathcal{T}L \circ m& \longmapsto& (\mathcal{T}L \circ m) \circ \psi\, ,
\end{eqnarray*}
where $\psi$ is the change of parametrisation \eqref{eqChangParam}.
\end{definition}
It is clear from local description \eqref{eqTulczDiffLOC} that the phase dynamics, as defined by \eqref{eqDefDS}, is actually a sub--supermanifold of $ \sT \sT^{*}M$, that is the phase dynamics is a representable functor. However, the phase dynamics is best described in terms  of the functor of points \eqref{eqSTAR}--\eqref{eqSTARSTAR}, especially when it comes to solutions \eqref{defSolPhDyn}.
\begin{remark}
 The phase dynamics   is in general an \emph{implicit} differential equation (see \eqref{eqDEfImpliDE}): $\mathcal{D} \subset \sT \sT^{*}M$  is  typically  not the `image' of an even vector field on $\sT^{*}M$. However,   we do obtain \emph{explicit} dynamics   when the Lagrangian  \eqref{eqLag} is regular (in the standard meaning). Moreover, if the Lagrangian is regular, then the phase dynamics projects to all of $\sT^{*}M$. For degenerate Lagrangians the phase dynamics usually projects to a sub-supermanifold of $\sT^{*}M$, which corresponds to the sub-supermanifold defined by primary constraints in the sense of Dirac \& Bergmann.  This  in complete agreement with the classical geometry study of differential equations and mechanics (see, e.g., \cite{Marmo:1992}).  
\end{remark}
\begin{remark}
Note that the foundation of our  formalism is the diffeomorphism \eqref{eqDiffAlpha}, rather than   the (even) symplectic structure on $\sT^{*}\sT M$, but this is just a matter of taste, since  the two things are equivalent.  Although we are discussing phase dynamics, Possion brackets are not used here.
\end{remark} 
\subsection{Phase trajectories} By a \emph{Lagrangian system} we mean a pair $(M,L)$, where $M$ is a supermanifold and $L$ is a Lagrangian as in \eqref{eqLag}. 
 The \emph{phase trajectories} of a Lagrangian system $(M,L)$ are the solutions (in the sense of Definition \ref{defSolutDE}) of the phase dynamics \eqref{eqPhDyn} assciated with $L$. 
 \begin{definition}\label{defSolPhDyn}
A \emph{solution of the phase dynamics} \eqref{eqPhDyn} is an $S$--curve 
\begin{equation}\label{eqSolPhDyn}
c_{S} \in \InHom(\mathbb{R}, \sT^{*}M)(S)\,,
\end{equation}
whose tangent prolongation $\st c_{S}$ takes values in $\mathcal{D}(S) \subset \sT \sT^{*}M(S)$ for all $t \in \mathbb{R}$.
\end{definition}
\begin{proposition}\label{propFunctoriality}
If \eqref{eqSolPhDyn} is a solution of the phase dynamics \eqref{eqPhDyn} at $S$, then
$$c_{P} := c_{S} \circ (\psi \circ \Id_{\mathbb{R}}),$$
where $\psi$ is the change of parametrisation \eqref{eqChangParam}, is again a solution of the phase dynamics \eqref{eqPhDyn} at $P$.
\end{proposition}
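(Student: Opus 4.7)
The plan is to reduce the statement to the functorial behaviour of the two ingredients that enter Definition \ref{defSolPhDyn}: the tangent prolongation of an $S$--curve and the functor $\mathcal{D}$. Once these two pieces commute with changes of parametrisation in a compatible way, the conclusion is essentially formal.

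First I would verify that the tangent prolongation is natural with respect to $\psi$, in the sense that
\begin{equation*}
\st c_P(t) \;=\; \st c_S(t) \circ \psi \qquad \text{for all } t \in \mathbb{R}.
\end{equation*}
This is essentially the content of the local formula \eqref{eqSTangentProlong}: the derivative $\rmd/\rmd t$ acts only on the $\mathbb{R}$--factor, while $\psi \in \Hom(P,S)$ acts only on the parametrising factor, so the two operations commute. In particular, functions pulled back from $P$ are constant in $t$, so differentiating $c_P = c_S \circ (\psi,\Id_\mathbb{R})$ in $t$ and then restricting to $\{t\}$ gives the same morphism $P \to \sT\sT^{*}M$ as first forming $\st c_S(t) : S \to \sT\sT^{*}M$ and then precomposing with $\psi$.

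Next I would use the very definition of $\mathcal{D}$. Since $\st c_S(t) \in \mathcal{D}(S)$, by \eqref{eqDefDS} there exists $m_t \in \sT M(S)$ with $\st c_S(t) = \mathcal{T}L \circ m_t$. Precomposition with $\psi$ then gives
\begin{equation*}
\st c_P(t) \;=\; \st c_S(t) \circ \psi \;=\; (\mathcal{T}L \circ m_t) \circ \psi \;=\; \mathcal{T}L \circ (m_t \circ \psi),
\end{equation*}
and since $m_t \circ \psi \in \sT M(P)$, this shows $\st c_P(t) \in \mathcal{D}(P)$ for every $t \in \mathbb{R}$. In other words, the map $\psi^{\mathcal{T}L}$ of Definition \ref{defPhDyn} sends $\st c_S(t)$ to $\st c_P(t)$, which is exactly the naturality square one needs.

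The only genuine point to check carefully is the first step, namely the interchange of the tangent prolongation with the change of parametrisation; I expect this to be the main (though routine) obstacle, since it relies on parsing the $S$--curve formalism of \eqref{eqGammaS} and \eqref{eqSTangentProlong} correctly. Once this is in place, the rest is immediate from the construction of $\mathcal{D}$ as a subfunctor of $\sT\sT^{*}M$, and one concludes that $c_P$ is a phase trajectory at $P$.
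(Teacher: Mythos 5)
Your proposal is correct and follows essentially the same route as the paper's own (much terser) proof: establish $\st c_{P}(t) = \st c_{S}(t)\circ\psi$ and then invoke the stability of $\mathcal{D}$ under precomposition with $\psi$. You merely spell out the two steps the paper leaves implicit, namely the commutation of the tangent prolongation with the change of parametrisation and the use of \eqref{eqDefDS} to see that $\mathcal{T}L\circ(m_t\circ\psi)\in\mathcal{D}(P)$, which is a faithful (and slightly more explicit) rendering of the intended argument.
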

\begin{proof}
For any $t \in \mathbb{R}$ we have $\st c_{P}(t) = \st c_{S}(t)\circ \psi \in \sT \sT^{*}M(P)$. By assumption $c_{S}$ is a solution to the phase dynamics $\mathcal{D}(S)$ and thus for any and all $t$,  $\st c_{p}(t)$ takes values in $\mathcal{D}(P)$.
\end{proof}
Proposition \ref{propFunctoriality} states that once we have a solution of the phase dynamics in one parametrisations then we can change parametrisations and still have a solution of the phase dynamics. 
\begin{definition}
The \emph{solution space} of the phase dynamics $\mathcal{D}$ is the generalised supermanifold
$$\textnormal{Sol}_{\mathcal{D}} : \catname{SM}^{\rmo} \longrightarrow \catname{Set}$$
defined by
$$\textnormal{Sol}_{\mathcal{D}}(S) := \{ c_{S} \in \InHom(\mathbb{R}, \sT^{*}M)(S) ~ |~ \st c_{S}(t) \in \mathcal{D}(S)~ \textnormal{for all} \: t \in \mathbb{R} \}.$$
\end{definition}
\subsection{The Euler--Lagrange equations}  Let   $\gamma_{S} \in \InHom(\mathbb{R},M)(S)$   denote the $S$--curve underlying the $S$--curve \eqref{eqSolPhDyn}. That is, $\gamma_{S}  :=  \pi \circ c_{S}$, where $\pi : \sT^{*}M \rightarrow M$ is the obvious vector bundle projection. Then, in the coordinates \eqref{eqLocIndCoord}  solutions to  the phase dynamics (Definition \ref{defSolPhDyn}) are (locally) described by 
\begin{eqnarray}\label{eqPEQ}
 x^{a}_{S}(t)& :=& x^{a}\circ \gamma_{S}(t)\, ,\\
\nonumber  p_{b}^{S}(t)& := &p_{b}\circ c_{S}(t) = \left(\frac{\partial L}{\partial \dot{x}^{b}} \right)\circ \st \gamma_{S}(t)\, ,\\
 \nonumber \dot{x}_{S}^{c}(t) &:= & \dot{x}^{c}\circ \st \gamma_{S}(t) = \frac{\rmd x^{c}_{S}}{\rmd t}(t)\, ,\\
 \nonumber \dot{p}_{d}^{S}(t)&:= &\dot{p}_{d}\circ \st c_{S}(t) = \left(\frac{\partial L}{\partial x^{d}} \right)\circ \st \gamma_{S}(t) = \frac{\rmd p_{d}^{S}}{\rmd t}(t)\,  ,
\end{eqnarray}

 \noindent which we refer to as the \emph{phase equations}. Where hence we arrive at the expected \emph{Euler--Lagrange equations} (which we refer to as the equations of motion)
\begin{equation}\label{eqEL}
\frac{\rmd }{\rmd t }\left(\left(\frac{\partial L}{\partial \dot{x}^{a}}\right)\circ \st \gamma_{S}(t)\right) {-} \left(\frac{\partial L}{\partial x^{a}} \right)\circ \st \gamma_{S}(t) =0\, .
\end{equation}
Notice that \eqref{eqEL} is  not really   a single Euler--Lagrange equation, but rather a family of them, in accordance to our functorial definition \eqref{eqDEfImpliDE} of differential equations. According to    Remark \ref{rem21}, to solve \eqref{eqEL} one may pick an arbitrary  Grassmann algebra with a sufficiently large   number of generators and understand any constants of integration (which maybe even and odd) as taking values in this Grassmann algebra. Proposition \ref{propFunctoriality} implies that we do not have spell--out our parametrisation, thus confirming  our general   attitude that nothing physically meaningful can be derived from a fixed parametrisation (recall also Remark \ref{rem21}).   
\subsection{Constants of motion}   We say that a function $f \in C^{\infty}(\sT^{*}M)$ is a \emph{constant of motion} for the Lagrangian system $(M,L)$ if and only if
$$f \circ c_{S}(t) \in \InHom(\mathbb{R}, \mathbb{R}^{1|1})(S)$$
is constant (independent of time $t \in \mathbb{R}$) for all $c_{S} \in \textnormal{Sol}_{\mathcal{D}}(S)$.  
Under the change of parametrisation \eqref{eqChangParam} we see that
$c_{S} \longmapsto c_{P}= c_{S}\circ (\psi, \Id_{R})$ 
implies that 
$$\frac{\rmd}{\rmd t}(f\circ c_{P}(t)) = \frac{\rmd}{\rmd t}(f\circ c_{S}(t)) \circ \psi =0\, .$$
Thus under $\bar{\psi} : \InHom(\mathbb{R}, \sT^{*}M)(S) \longrightarrow \InHom(\mathbb{R}, \sT^{*}M)(P)$  the definition of a constant of motion is unaffected.
\subsection{Symmetries of the phase dynamics} As we are dealing with dynamics on supermanifolds it is natural that we consider both even and odd transformations. That is we must consider the superdiffeomorphism groups $\InDiff $ defined by \eqref{eqDefSupDiffGroup} as opposed to just the diffeomorphism groups (i.e., invertible morphisms in the category of supermanifolds). \par
Let $\Phi \in \InDiff(\sT^{*}M)$ be a  superdiffeomorphism, let $\mathcal{D}$ be the phase dynamics \eqref{eqPhDyn}  associated with the Lagrangian  \eqref{eqLag}, and denote by $c_{S} \in \textnormal{Sol}_{\mathcal{D}}(S)$   a solution  of the phase dynamics $\mathcal{D}(S)$ (recall Definition \ref{defSolPhDyn}).  In the spirit of  \cite{Marmo:1992}, we give the following definition. 
\begin{definition}\label{defSymPhDyn}
$\Phi  $ is a \emph{symmetry of the phase dynamics}   $\mathcal{D}$ if and only if    
$(\Phi \underline{\circ}c)_{S} \in \textnormal{Sol}_{\mathcal{D}}(S)$   for all $S \in \catname{SM}$.  
\end{definition}
By   \emph{symmetries of the Lagrangian system} $(M,L)$ we simply mean the symmetries of the corresponding phase dynamics  $\mathcal{D}$, according to Definition \ref{defSymPhDyn} above,  
that is   subfunctors of $\InDiff(\sT^{*}M)$   mapping solutions of  $\mathcal{D}$ to themselves. Observe that the collection of all symmetries of a Lagrangian system form a super Lie group, under the composition in $\InDiff(\sT^{*}M)$. Then we have   a functor from the (opposite) category of supermanifolds to set--theoretical groups, whose representability is an interesting question, though not discussed here. 
\subsection{Infinitesimal symmetries of the phase dynamics} Again we will follow the ethos of \cite{Marmo:1992} making only the necessary changes for our formalism.
\begin{definition}
 We will say that a function $F \in C^{\infty}(\sT \sT^{*}M)$ \emph{vanishes on the phase dynamics} if and only if
$F\circ n =0$
  for all $S$--points $n \in \mathcal{D}(S)$.
 \end{definition}
Functions vanishing on the phase dynamics clearly form an ideal of $C^{\infty}(\sT \sT^{*}M)$. For any homogeneous vector field $X \in \Vect(\sT^{*}M)$, denote by 
\begin{equation}\label{eqTgLift}
\mathcal{L}_{X} := \rmd_{\sT}X
\end{equation}
 its tangent lift  (see \cite{MR1482898}, Section 4, and also \cite{MR0350650}). Let $(M,L)$ be a Lagrangian system and $ \mathcal{D}$ be the corresponding phase dynamics.
\begin{definition}\label{defInfSymmPhDyn}
$X$ is an \emph{infinitesimal symmetry} of $ \mathcal{D}$    if and only if
\begin{equation}\label{eqDefSymmPhDyn}
(\mathcal{L}_{X}F)\circ n=0\, ,
\end{equation}
  for all   $n \in \mathcal{D}(S)$ and all functions $F$ vanishing on the phase dynamics. \end{definition}
Definition \ref{defInfSymmPhDyn} immediately extends by linearity to  non--homogeneous vector fields. Plainly, infinitesimal symmetries form a sub Lie superalgebra of the Lie superalgebra of vector fields on $\sT^{*}M$ equipped with the standard graded Lie bracket.   \par
In terms of solutions to the phase dynamics (see Definition \ref{defSolPhDyn}), the condition \eqref{eqDefSymmPhDyn} above is clearly equivalent to
$$(\mathcal{L}_{X}F)\circ \st c_{S}(t)=0\, ,$$
 for all time $t \in \mathbb{R}$.
In particular, in the adapted local coordinates \eqref{eqLocIndCoord}, the functions
\begin{equation*}
 \phi_{a} := p_{a}- \frac{\partial L}{\partial \dot{x}^{a}}\, ,\quad  \hat{\phi}_{b} := \dot{p}_{b} - \frac{\partial L}{\partial x^{a}}\, ,
\end{equation*}
obtained from \eqref{eqEqDiffLagr}, both vanish on the phase dynamics. Moreover, it is convenient  to consider the ideal of functions that vanish on the phase dynamics to be generated locally by the set of  functions $\{\phi_{a}, \hat{\phi}_{b} \}$. This is clear from the local description \eqref{eqTulczDiffLOC} of the image of the Tulczjyew differential \eqref{eqTulczDiff}.  
\begin{proposition}\label{propLocForm}
A homogeneous vector field $X$ is an infinitesimal symmetry of $\mathcal{D}$ if and only of if 
\begin{equation*}
 (\mathcal{L}_{X}\phi_{a})\circ \st c_{S}(t)=0\, ,\quad \textnormal{and}\quad (\mathcal{L}_{X}\hat{\phi}_{b})\circ \st c_{S}(t)\, ,
\end{equation*}
\noindent for any and all $c_{S} \in \textnormal{Sol}_{\mathcal{D}}(S)$ for all time $t \in \mathbb{R}$.
\end{proposition}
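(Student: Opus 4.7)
The plan is to prove both implications by exploiting the local generation of the ideal of functions vanishing on $\mathcal{D}$ by $\{\phi_a,\hat\phi_b\}$, and then transfer everything to the equivalent formulation in terms of $\st c_S(t)$ stated in the paragraph following Definition \ref{defInfSymmPhDyn}.

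For the forward direction (``only if''), the argument is essentially immediate: the functions $\phi_a$ and $\hat\phi_b$ arising from \eqref{eqEqDiffLagr} both vanish on the phase dynamics, as already noted. Hence, applying Definition \ref{defInfSymmPhDyn} to $F=\phi_a$ and $F=\hat\phi_b$ gives $(\mathcal{L}_X\phi_a)\circ n = 0$ and $(\mathcal{L}_X\hat\phi_b)\circ n=0$ for every $n\in\mathcal{D}(S)$, which in view of the equivalence noted after Definition \ref{defInfSymmPhDyn} translates into the two conditions along $\st c_S(t)$.

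For the reverse direction (``if''), I would take an arbitrary function $F\in C^\infty(\sT \sT^{*}M)$ vanishing on the phase dynamics and use the assumption that, locally, the ideal of such functions is generated by $\{\phi_a,\hat\phi_b\}$: thus one may write
\begin{equation*}
F \;=\; f^{a}\,\phi_{a} \,+\, g^{b}\,\hat\phi_{b}
\end{equation*}
for some coefficient functions $f^{a},g^{b}\in C^{\infty}(\sT \sT^{*}M)$ of appropriate Grassmann parity. Applying $\mathcal{L}_X$ and using the graded Leibniz rule for the tangent lift \eqref{eqTgLift} yields
\begin{equation*}
\mathcal{L}_X F \;=\; \mathcal{L}_X(f^{a})\,\phi_{a} + (-1)^{\widetilde X\widetilde{f^{a}}} f^{a}\,\mathcal{L}_X(\phi_{a}) + \mathcal{L}_X(g^{b})\,\hat\phi_{b} + (-1)^{\widetilde X\widetilde{g^{b}}} g^{b}\,\mathcal{L}_X(\hat\phi_{b}).
\end{equation*}
Now evaluate along $\st c_{S}(t)$ (equivalently, at any $n\in\mathcal{D}(S)$). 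The first and third terms vanish because $\phi_a$ and $\hat\phi_b$ themselves vanish on the phase dynamics, while the second and fourth vanish by the hypothesis of the proposition. Hence $(\mathcal{L}_XF)\circ\st c_S(t)=0$ for every $F$ in the ideal, i.e., $X$ satisfies Definition \ref{defInfSymmPhDyn}.

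The only mildly delicate step will be the bookkeeping of signs in the graded Leibniz rule, and the implicit use of the fact that ``checking on generators'' is sufficient, which is exactly the content of the local generation statement quoted from the image description \eqref{eqTulczDiffLOC}. The non-homogeneous case follows at once by linearity, as already noted after Definition \ref{defInfSymmPhDyn}.
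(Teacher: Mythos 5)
Your proof is correct and follows exactly the route the paper intends: the paper states the result without a written proof, treating it as an immediate consequence of the preceding observation that the ideal of functions vanishing on $\mathcal{D}$ is locally generated by $\{\phi_{a},\hat{\phi}_{b}\}$ together with the reformulation of \eqref{eqDefSymmPhDyn} along tangent prolongations $\st c_{S}(t)$. Your spelled--out version (forward direction by specialising $F$ to the generators, reverse direction by the graded Leibniz rule applied to $F=f^{a}\phi_{a}+g^{b}\hat{\phi}_{b}$ and evaluation at $S$--points) is precisely that argument made explicit, and the sign bookkeeping you flag is harmless since the offending terms are killed by evaluation on $\mathcal{D}(S)$ anyway.
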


We can interpret infinitesimal symmetries as infinitesimal diffeomorphisms of the tangent bundle of the phase space that preserve the phase dynamics, and  arise from the tangent lifts  of vector fields on the phase space. Proposition \ref{propLocForm} allows to deduce the local form of infinitesimal symmetries.

 \section{Examples of supermechanical systems}\label{sec:examples}
 
 In this section we will present a few simple examples of supermechanical systems in order to highlight the basic constructions.  We will consider only Lagrangians that are inherently real, one may need to insert factors of the complex unit to match the Lagrangians presented in the physics literature (recall Remark \ref{remREAL}). We will focus primarily on constructing the phase dynamics and presenting solutions of the Euler--Lagrange equations. We will not be considering particularly complicated systems and so we can directly integrate the equations of motion.  A full study of specific models is outside the scope of the paper.

\begin{example}[1--dimensional Dirac--like Lagrangian]
As a very simple, but non--trivial, example we consider the supermechanical theory that is derived by a dimensional reduction of the $(1+1)$--dimensional Dirac Lagrangian in light--cone coordinates (absorbing factors of the complex unit).  Here we have $M = \mathbb{R}^{0|2}$ and the Lagrangian \eqref{eqLag} is
$$L = \frac{1}{2}(\psi_{+}\dot{\psi}_{+} + \psi_{-}\dot{\psi}_{-}) - m \psi_{+}\psi_{-}\, .$$
Here $m$ is a real parameter that we interpret as the quasi--classical mass of the particle. Note that this Dirac--like Lagrangian is degenerate, as is typical for theories with physical fermions. The (global) coordinate description of the Tulczyjew differential \eqref{eqTulczDiffLOC} read now 
\begin{align*}
& \pi_{+}\circ \mathcal{T}L = - \frac{1}{2}\psi_{+}\,  , &&  \pi_{-}\circ \mathcal{T}L  = - \frac{1}{2}\psi_{-} \,  ,\\
& \dot{\pi}_{+}\circ \mathcal{T}L =  \frac{1}{2}\dot{\psi}_{+}-  m \psi_{-}\,  , &&  \dot{\pi}_{-}\circ \mathcal{T}L  =  \frac{1}{2}\dot{\psi}_{-} +  m \psi_{+}\,  .
\end{align*} 
The phase equations \eqref{eqPEQ} boil down to the following
\begin{align*}
&  - \frac{1}{2} \psi_{+}\circ \gamma_{S}(t) = \pi_{+} \circ c_{S}(t)\,  , && - \frac{1}{2}\psi_{-}\circ \gamma_{S}(t) = \pi_{-} \circ c_{S}(t)\,  , \\
& \frac{\rmd}{\rmd t}(\pi_{+}\circ c_{S}(t)) =  \frac{1}{2}\frac{\rmd}{\rmd t}(\psi_{+}\circ \gamma_{S}(t))-  m\:  \psi_{-}\circ \gamma_{S}(t)\,  , &&  \frac{\rmd}{\rmd t}(\pi_{-} \circ c_{S}(t))  =  \frac{1}{2}\frac{\rmd}{\rmd t}(\psi_{-}\circ \gamma_{S}(t))+  m \: \psi_{+}\circ \gamma_{S}(t)\,  .
\end{align*} 
\noindent The Euler--Lagrange equations \eqref{eqEL} can be written out fully explicitly as
\begin{equation}\label{eqDiracELEQ}
 \frac{\rmd}{\rmd t}(\psi_{+}\circ \gamma_{S}(t)) -  m \psi_{-}\circ  \gamma_{S}(t)=0\, , \quad \frac{\rmd}{\rmd t}(\psi_{-}\circ \gamma_{S}(t))+ m \psi_{+}\circ \gamma_{S}(t)  =0\, ,
\end{equation}
 as rather expected. Moreover we can solve the phase dynamics quite explicitly, however we only need to present solutions to \eqref{eqDiracELEQ}, as extending this to the phase dynamics is straightforward. The reader will have notice the similarity between   \eqref{eqDiracELEQ}, and the equations  discussed in Example \ref{exExplicitDE}. In particular, we know how to integrate them: 
\begin{align*}
& \psi_{+} \circ \gamma_{S}(t) =  A_{S} \cos(mt) + B_{S}\sin(mt)\,  ,\\
& \psi_{-} \circ \gamma_{S}(t) =  B_{S} \cos(mt) -  A_{S}\sin(mt)\,  ,
\end{align*}
\noindent where $\psi_{+} \circ \gamma_{S}(0) =  A_{S}$ and $\psi_{-} \circ \gamma_{S}(0) =  B_{S}$ are the initial $S$--points, cf. \eqref{eqInPoints}.\par

Now we pass to the question of symmetries. Rather than presenting the most general infinitesimal symmetry, we will only give one example. By using Proposition \ref{propLocForm}, we claim that  the even vector field
$$X =  \psi_{-} \frac{\partial}{\partial \psi_{+}} - \psi_{+}\frac{\partial}{\partial \psi_{-}} + \pi_{-} \frac{\partial}{\partial \pi_{+}} - \pi_{+} \frac{\partial}{\partial \pi_{-}}$$
 is an infinitesimal symmetry of the phase dynamics. We leave details of the calculation to prove our claim to the reader. We only comment that this infinitesimal symmetry comes from the well known fermionic rotation symmetry $\delta \psi_{+} = \epsilon \psi_{-}$,  $\delta \psi_{-} = -\epsilon \psi_{+}$.\par
  In the standard formalism  the associated Noether charge is just $R =  \psi_{+}\psi_{-}$. Thus, we recover the well known fact that the interaction term between $\psi_{+}$ and $\psi_{-}$ in this and similar models is independent of time, provided the equations of motion are applied. The interested reader can easily check that the general solution to the Euler--Lagrange equations \eqref{eqDiracELEQ} satisfies this condition.
\end{example}

 \begin{example}[$N=2$ supersymmetric mechanics \cite{Witten:1981}] The configuration supermanifold here is  $M = \mathbb{R}^{1|2}$, which we will  equip with global coordinates $(x, \psi_{+}, \psi_{-})$. The Lagrangian \eqref{eqLag} for this model is given by
 $$L = \frac{1}{2}\dot{x}^{2} + \frac{1}{2}U^{2}(x) + \frac{1}{2}\left(\dot{\psi}_{+} \psi_{+} - \dot{\psi}_{-}\psi_{-} \right) +  U^{\prime}(x)\psi_{+}\psi_{-}\, .$$
This Lagrangian describes the supersymmetric mechanics of a particle in one (even) dimension in a potential $-U^{2}(x)$.   The (global) coordinate expression of the Tulczyjew differential \eqref{eqTulczDiffLOC}  now read
 \begin{align*}
 & p \circ \mathcal{T}L = \dot{x}\, , && \pi_{+}\circ \mathcal{T}L = \frac{1}{2} \psi_{+}\, , && \pi_{-}\circ \mathcal{T}L = - \frac{1}{2} \psi_{-}\, ,\\
 & \dot{p} \circ \mathcal{T}L =  U(x)U^{\prime}(x) + U^{\prime \prime} \psi_{+}\psi_{-}\, ,&& \dot{\pi}_{+}\circ \mathcal{T}L = -\frac{1}{2} \dot{\psi}_{+} +  U^{\prime}(x)\psi_{-}\, , && \dot{\pi}_{-}\circ \mathcal{T}L = \frac{1}{2} \dot{\psi}_{-} -  U^{\prime}(x)\psi_{+}\, .
 \end{align*}
 The phase equations \eqref{eqPEQ} then boils down to the following equations on $S$--curves
 \begin{align*}
 & \frac{\rmd}{\rmd t} (x\circ \gamma_{S}(t))= p \circ c_{S}(t)\, , && \frac{1}{2}\psi_{+}\circ \gamma_{S}(t) = \pi_{+}\circ c_{S}(t)\, , \\
 & - \frac{1}{2}\psi_{-}\circ \gamma_{S}(t) = \pi_{-}\circ c_{S}(t)\, ,&& 
  \frac{\rmd}{\rmd t}(p \circ c_{S}) =  ( U(x)U^{\prime}(x) +  U^{\prime \prime}(x)\psi_{+}\psi_{-})\circ \gamma_{S}(t)\, ,\\
  & \frac{\rmd}{\rmd t}(\pi_{+} \circ c_{S}(t)) =  -\frac{1}{2} \frac{\rmd}{\rmd t}(\psi_{+}\circ \gamma_{S}) +  (U^{\prime}(x)\psi_{-})\circ  \gamma_{S}(t)\, , && \frac{\rmd}{\rmd t}(\pi_{-} \circ c_{S}(t)) =  \frac{1}{2}\frac{\rmd}{\rmd t}(\psi_{-} \circ \gamma_{S}(t)) - (U^{\prime}(x)\psi_{+})\circ  \gamma_{S}(t)\, .
 \end{align*}
Observe that the  Euler--Lagrange equations \eqref{eqEL} we obtain here, viz.
 \begin{eqnarray}
 \frac{\rmd^{2}}{\rmd t^{2}}(x \circ\gamma_{S}(t)) &=& (U(x)U^{\prime}(x) +  U^{\prime \prime}(x)\psi_{+}\psi_{-})\circ \gamma_{S}(t)\, ,  \label{eqEX2ELEQ1} \\
 \frac{\rmd }{\rmd t }(\psi_{+} \circ \gamma_{S}(t)) &=& (U^{\prime}(x)\psi_{-}) \circ \gamma_{S}(t)\, ,\label{eqEX2ELEQ2}\\  \frac{\rmd }{\rmd t }(\psi_{-} \circ \gamma_{S}(t)) &=&  (U^{\prime}(x)\psi_{+}) \circ \gamma_{S}(t),\label{eqEX2ELEQ3}
 \end{eqnarray}
depend on the choice of the potential $U(x)$. 
 As the intention of this work is not to get heavily involved in methods of finding solutions to the phase dynamics or Euler--Lagrange equations,\footnote{ For solutions of the Euler--Lagrange equations for this and similar models (i.e. with different reality conditions chosen) one can consult \cite{Heumann:2000,Junker:1996,Manton:1999}. Note that the proceeding works require that  all the dynamics variables take values in some chosen Grassmann algebra. } for illustration purposes we pick the \emph{harmonic potential} $U(x) = k x$, where $k$  is taken to be a real number. With this potential the equations \eqref{eqEX2ELEQ1}--\eqref{eqEX2ELEQ2}--\eqref{eqEX2ELEQ3} for the even and odd degrees of freedom decouple and can be solved via direct integration. Explicitly\footnote{The reason of the $1/k$ factor in front of $b_S$ is that it makes $b_S$ coincides with the derivative of $x$ at zero.} we have
 \begin{align*}
 & x \circ \gamma_{S}(t) = a_{S} \cosh(k t) + \frac{b_{S}}{k} \sinh(k t)\, ,\\
 & \psi_{+}\circ \gamma_{S}(t) =  A_{S} \cosh(k t) + B_{S}\sinh(kt)\, ,\\
 & \psi_{-}\circ \gamma_{S}(t) =  B_{S} \cosh(k t) + A_{S}\sinh(kt)\, ,
 \end{align*}
 \noindent where the (`functor valued') integration constants are defined by our initial conditions
 \begin{align*}
 & x \circ \gamma_{S}(0) = a_{S}\, , && \frac{\rmd }{\rmd t}\left.(x\circ \gamma(t))\right |_{t=0} = b_{S}\, ,\\
 & \psi_{+}\circ \gamma_{S}(t) = A_{S}\, , && \psi_{-}\circ \gamma_{S}(t) = B_{S}\, .
 \end{align*}
 Returning to the case of a general potential $U(x)$, this model exhibits two supersymmetries, which in `physics notation' are usually written as
 \begin{align*}
 & \delta_{1}x = \epsilon \psi_{+}\, , && \delta_{1}\psi_{+}= \epsilon \dot{x}\, , && \delta_{1}\psi_{-} = \epsilon U(x)\, ,\\
 & \delta_{2}x = \bar{\epsilon} \psi_{-}\, , && \delta_{2}\psi_{+}= -\epsilon U(x)\, , && \delta_{2}\psi_{-} = - \epsilon \dot{x}\, ,
 \end{align*}
 \noindent where $\epsilon$ and $\bar{\epsilon}$ are independent Grassmann odd parameters. Note that, as we present it, these supersymmetries are \emph{on--shell}, that is we have a symmetry provided we impose the equations of motion \eqref{eqEX2ELEQ1}--\eqref{eqEX2ELEQ2}--\eqref{eqEX2ELEQ3}.  By using Proposition \ref{propLocForm}, we claim that the corresponding  vector fields describing these supersymmetries, now understood as symmetries of the phase dynamics, are
 \begin{eqnarray*}
 X_{1} &=& \psi_{+}\frac{\partial}{\partial x} + p \frac{\partial}{\partial \psi_{+}} + U(x)\frac{\partial }{\partial \psi_{-}} + U^{\prime}(x)\psi_{-}\frac{\partial}{\partial p} + \frac{1}{2}p \frac{\partial}{\partial \pi_{+}} - \frac{1}{2} U(x)\frac{\partial}{\partial \pi_{-}}\, ,\\
 X_{2} &=& \psi_{-}\frac{\partial}{\partial x} - U(x)\frac{\partial}{\partial \psi_{+}} - p\frac{\partial }{\partial \psi_{-}} + U^{\prime}(x)\psi_{+}\frac{\partial}{\partial p} - \frac{1}{2}U(x) \frac{\partial}{\partial \pi_{+}} + \frac{1}{2} p\frac{\partial}{\partial \pi_{-}}\, .
 \end{eqnarray*}
 \noindent It is left as an exercise for the reader to show that by tangent lifting (see \eqref{eqTgLift}) and using the equations of motion \eqref{eqEX2ELEQ1}--\eqref{eqEX2ELEQ2}--\eqref{eqEX2ELEQ3} that these vector fields are indeed symmetries of the phase dynamics.\par
 \end{example}

  \begin{example}[Non--holonomic constraints]Rather than attempting to develop a full theory of \emph{constrained supermechanical systems} we present one simple example. We start with a rather general Lagrangian on $M = \mathbb{R}^{1|2}$, which we equip with (global) coordinates $(x, \psi_{+}, \psi_{-})$, 
 \begin{equation}\label{eqnLagrangianConstrained}
 L = \frac{1}{2}\dot{x}^{2} + \frac{1}{2}\left(\dot{\psi}_{+} \psi_{+} - \dot{\psi}_{-} \psi_{-} \right) -U(x)\psi_{+}\psi_{-}\, .
 \end{equation}
 We then want to implement the \emph{linear} non--holonomic constraint $\dot{\psi}_{-} =0$. As standard, if we were to calculate the Euler--Lagrange equations for this system and then enforce the constraint we end up with the wrong equations of motion. To remedy this, we need a constrained version of the Tulczyjew differential. Let us denote the vector bundle that describes the constraint by $E$. Then we have the natural embedding
 $$\iota : E \hookrightarrow \sT \mathbb{R}^{0|2},$$
 given in local coordinates as
 $$(x, \psi_{+}, \psi_{-} , \dot{x}, \dot{\psi}_{+}, \dot{\psi}_{-}) \circ \iota = (x, \psi_{+}, \psi_{-} , \dot{x}, \dot{\psi}_{+}, 0)\,  .$$
The dual vector bundle $E^{*}$ we naturally equip with coordinates $(x, \psi_{+}, \psi_{-}, p, \pi_{+})$. We have a dual morphism, which in this case is just the obvious  projection 
$$\iota^{\dag} : \sT^{*} \mathbb{R}^{1|2}  \rightarrow E^{*}.$$
The \emph{constrained Tulczyjew differential} $ \mathcal{T}L^{E} : E \longrightarrow \sT E^{*}$ is then defined as
$$ \mathcal{T}L^{E} = \sT \iota^{\dag} \circ \alpha^{-1} \circ dL\, .$$
We need not change a word in the definition of the phase dynamics or solutions thereof, (Definition \ref{defPhDyn} \& Definition \ref{defSolPhDyn}), other than replacing the standard Tulczyjew differential  with the constrained Tulczyjew differential (cf. \cite{Grabowski:2009}), viz.
\begin{align*}
& p \circ \mathcal{T}L^{E} =  \dot{x}\, , & & \pi_{+} \circ \mathcal{T}L^{E} = \frac{1}{2} \psi_{+}\, ,\\
& \dot{p} \circ \mathcal{T}L^{E} = - U^{\prime} (x) \psi_{+}\psi_{-}\,  , && \dot{\pi}_{+} \circ \mathcal{T}L^{E} = - \frac{1}{2}\dot{\psi}_{+}- U(x)\psi_{-}\, .
\end{align*}
In particular the phase equations \eqref{eqPEQ} can easily be deduced and the associated Euler--Lagrange equations \eqref{eqEL} are given by
\begin{align*}
& \frac{\rmd^{2}}{\rmd t^{2}}(x \circ \gamma_{S}(t) )  = - \left(U^{\prime}(x)\psi_{+}\psi_{-} \right ) \circ \gamma_{S}(t)\, ,\\
& \frac{\rmd}{\rmd t}(\psi_{+}\circ \gamma_{S}(t)) = - \left(U(x)\psi_{-}\right) \circ \gamma_{S}(t)\, ,\\
& \frac{\rmd}{\rmd t}(\psi_{-}\circ \gamma_{S}(t)) =0\, .
\end{align*}
These equations of motion are identical (up to the obvious relabelling) to the equations of motion  of  the $N$=1 supersymmetric model (see \cite{Manton:1999} and references therein) given by
$$L^{\prime} = \frac{1}{2} \dot{x}^{2} + \frac{1}{2}\dot{\psi} \psi + \lambda U(x) \psi\, , $$
where $\lambda$ is a Grassmann odd constant. However, we cannot consider this `Lagrangian' as a genuine function on $\sT \mathbb{R}^{1|1}$ due to the odd constant: thus it cannot define (phase) dynamics in our understanding (Definition \ref{defPhDyn}). Our resolution to this is clear, we really have the  Lagrangian \eqref{eqnLagrangianConstrained} on $\mathbb{R}^{1|2}$ subject to a linear non--holonomic constraint. 
 \end{example}

 \begin{example}[Geodesics on the super--sphere $\mathbb{S}^{2|2}$]
 Let us equip   $\R^{3|2}$ with (global) coordinates $(x,y,z,\psi_+,\psi_-)$. Then the equation
  \begin{equation}
x^2+y^2+z^2-  \psi_{+}\psi_{-}=1
\end{equation}
defines  the super--sphere
  \begin{equation}
\mathbb{S}^{2|2}\subset\R^{3|2}\, .
\end{equation}
As (local) coordinates on $M:=\mathbb{S}^{2|2}$ we can use the standard `angles' $(\theta,\phi)$, i.e., essentially the coordinates inherited from using polar coordinates on $\R^{3}$, complemented by the   `odd stuff'  $(\psi_+,\psi_-)$ inherited from the environment. The underlying manifold of $M$ is the standard two--sphere $|M| = \mathbb{S}^2$. As a sub--supermanifold of the super--Riemannian manifold  $\R^{3|2}$, our supermanifold $M$ is equipped with a non--degenerate metric inhered by the embedding, which in the above coordinates reads
\begin{equation}
g=\left(\begin{array}{c|c}g_0 & 0 \\\hline 0 & J_2\end{array}\right)\, ,\quad g_0=\left(\begin{array}{cc}1 & 0 \\0 & \sin^2\theta\end{array}\right)\, ,\quad J_2=\left(\begin{array}{cc}0& -1 \\1 & 0\end{array}\right)\, .
\end{equation}
Then  as  the Lagrangian  \eqref{eqLag}, we may take the `length functional'
\begin{equation}
L(\dot{\theta},\dot{\phi}, \dot{\psi}_+,\dot{\psi}_-):=\frac{1}{2}(\dot{\theta}^2+\sin^2\theta\dot{\phi}^2)- \dot{\psi}_+\dot{\psi}_-\, .
\end{equation}
By the definition \eqref{eqTulczDiff} of the Tulczyjew differential $\mathcal{T}L$, we have
\begin{align*}
& p_\theta\circ \mathcal{T}L = \frac{\partial L}{\partial \dot{\theta}}=\dot{\theta}\, , && 
p_\phi\circ \mathcal{T}L = \frac{\partial L}{\partial \dot{\phi}}=\sin^2\theta\dot{\phi}\, ,\\
 & \pi_+\circ \mathcal{T}L =- \dot{\psi}_- \, , && 
\pi_-\circ \mathcal{T}L =- \dot{\psi}_+\, ,
\end{align*}
 where   $p_\theta, p_\phi, \pi_+, \pi_-$ are the momenta of the coordinates $\theta,\phi,\psi_+,\psi_-$  of $M$, respectively (i.e., those collectively denoted by $p_b$ in \eqref{eqLocIndCoord}). An $S$--curve  $c_S$ is then a solution to our phase dynamics (cf. Definition \ref{defSolPhDyn}) if and only if
 \begin{align*}
& p_\theta\circ c_S = \dot{\theta}\circ \st\gamma_S\, ,&  p_\phi\circ c_S = (\sin(\theta)\dot{\phi})\circ \st\gamma_S\, ,& &
\pi_+\circ c_S = -\dot{\psi}_-\circ \st\gamma_S\, , && 
\pi_-\circ c_S = \dot{\psi}_+\circ \st\gamma_S\, ,\\
& \dot{\theta} \circ \st\gamma_S =\frac{\rmd}{\rmd t}(\theta\circ\gamma_S)\, ,&
 \dot{\phi} \circ \st\gamma_S =\frac{\rmd}{\rmd t}(\phi\circ\gamma_S)\, ,&&
\dot{\psi}_+ \circ \st\gamma_S =\frac{\rmd}{\rmd t}(\psi_+\circ\gamma_S)\, ,&&
 \dot{\psi}_- \circ \st\gamma_S =\frac{\rmd}{\rmd t}(\psi_-\circ\gamma_S)\, ,\\
 & \dot{p}_\theta\circ\st  c_S = (\cos\theta\sin\theta\dot{\phi}^2)\circ \st\gamma_S\, ,&
 \dot{p}_\phi\circ\st c_S =0\, ,&&   \pi_+\circ \st c_S = 0\, ,&&   \pi_-\circ \st c_S = 0\, .
\end{align*}
The Euler--Lagrange equations \eqref{eqEL}   look like
\begin{align*}
& \frac{\rmd^2}{\rmd t^2}(\theta\circ\gamma_S(t)) {-}    (\cos\theta\sin\theta )\circ  \gamma_S(t)\: \frac{\rmd^2}{\rmd t^2}(\phi\circ\gamma_S(t))    =0\, , & 
 \sin\theta \circ  \gamma_S(t) \: \frac{\rmd}{\rmd t}(\phi\circ\gamma_S(t))      =\ell_S\, ,\\
 & \frac{\rmd^2}{\rmd t^2}(\psi_+\circ\gamma_S(t)) =0\, ,& 
  \frac{\rmd^2}{\rmd t^2}(\psi_-\circ\gamma_S(t)) =0\, ,
\end{align*}
where $\ell$ is a `Grassmann even constant', i.e., an even function on $S$ (once $S$ is chosen) and thus independent of $t \in \mathbb{R}$. By analogy with the classical case, we  interpret $\ell_S$  the angular momentum of the $S$--curve.\par
Rather than finding the general solution, let us find a particular one, namely a solution $\gamma_S$ of the form
 \begin{align*}
& \theta\circ\gamma_S(t) =\frac{\pi}{2}\, ,& \phi\circ\gamma_S(t) =\ell_S t\, ,\\
& \psi_+ \circ \gamma_{S}(t) = A_St\, ,&
\psi_- \circ \gamma_{S}(t) = B_St\, .
\end{align*}
 Then the angular momentum $\ell_S$ can be interpreted as an initial datum, viz.
\begin{equation*}
\ell_S=\left.\frac{\rmd}{\rmd t}(\phi\circ\gamma_S(t))\right|_{t=0}\, ,
\end{equation*}
which needs to be specified. We finally observe that the initial $S$--point
  \begin{equation*}
(\pi/2,0,0,0) 
\end{equation*}
of $M=\mathbb{S}^{2|2}$, which is of course a genuine point on $\mathbb{S}^{2}$,
evolves `off'  $\mathbb{S}^2$ into $\mathbb{S}^{2|2}$, i.e., it adds nontrivial odd components, provided that the odd velocities  $A_S$ and $B_S$
(playing the same r\^ole as the $A_S$'s and $B_S$'s  in Example \ref{exExplicitDE}) are not simultaneously zero (for all $S\in \catname{SM}$).
 \end{example}

\section{Concluding remarks}\label{sec:con}
 In this work we have presented a different prespective on supermechanics in which the dynamical variables do not take their values in a specified Grassmann algebra, but rather, in a loose sense, take their values in all (real) supermanifolds. That is, we consider the basic dynamical objects, the $S$--curves, as time parametrised functors from the (opposite) category of supermanifolds to sets.  By specifing a (finitely generated) Grassmann algebra we recover the more standard understanding of supermechanics as Grassmann--valued mechanics.  The advantage of our approch is that we do not have to make a choice of the Grassmann algebra that our mechanics takes values in. We use our understanding of $S$-curves to generalise the geometric methods of Tulczyjew to supermechanics. In particular  we define the phase dynamics and solutions thereof in an intrinsic geometric setup that is insensitive to the distinction between non--degenerate and degenerate Lagrangians. \par  
The downside of our approach is that one now has to understand  many of the basic objects in a categorical framework as functors.  This adds a further layer of abstractness to the general theory of supermanifolds and supermechanics. However, this  abstractness seems rather unavoidable.  As supermanifolds  represent a class of noncommutative geometries, maybe it is unrealistic to expect that a `simple' and completley satisfactory notion of a curve --- and so dynamics --- can be found.  For example, supercurves understood as elements of $\Hom(\mathbb{R}^{1|1},M)$ are unsuitiable for defining the tangent and higher tangent bundles of a supermanifold in terms of jets:  while $S$--curves can be used for such constructions (see \cite{Bruce:2014}).  \par 
We have not presented the Hamiltonian side of the Tulczyjew triple in this work. We do not envisage any particular complications in developing the Hamiltonian side of the theory, though this should be checked carefully. Our main reason for sticking to the Lagrangian description is simply that in physics, and particularly field theory, one usually starts from a Lagrangian.\par
Supermechanics we certainly view as a toy--model of quasi--classical field theories. To our knowledge, the framework of geometric field theory --- multisymplectic structures and so on --- has not been applied to supersymmetric field theory. In part, we think that this is due to the lack of appreciation of  categorical methods applied to supergeometry within the geometric mechanics community. We hope that this work is a small step in remedying this situation.

\section*{Acknowledgments}\label{sec:Ack}
 We thank Janusz Grabowski and  Georg Junker for their comments on earlier drafts of this work.\par 
The Research of AJB and KG funded  by the  Polish National Science Centre grant under the contract number DEC-2012/06/A/ST1/00256.  The Research of GM has received funding from the European Union's Horizon 2020 research and innovation programme under the Marie Sk\l odowska--Curie grant agreement No 654721 `GEOGRAL'.
\appendix

\section{Tangent and cotangent bundles}\label{AppTM}
 There are --- just as in the classical case ---  several way to define the (total spaces of) tangent and cotangent bundles in the category of supermanifolds. The most direct definition is to consider then as \emph{natural bundles}:  that is given an atlas of a supermanifold we can canonically build an altas for the tangent bundle  and the cotangent bundle. To do this we need to specify coordinates on these bundles and the induced changes of coordinates.  With this in mind, recall that we can write changes of local coordinates as
$$x^{a} \mapsto x^{a'} = x^{a'}(x),$$
recalling that we require changes of local coordinates to respect the Grassmann parity. Moreover, any change of coordinates is polynomial in the odd coordinates. We can then define the tangent bundle of a supermanifold, which we denote as $\sT M$, as the supermanifold equipped with (adapted) local coordinates $(x^{a}, \: \dot{x}^{b})$, where $\widetilde{\dot{x}^b} = \widetilde{b}$ (i.e., the `dotted' coordinate has the same Grassmann parity as the associated `undotted' coordinate). The induced changes of local (fibre) coordinates --- and so the supermanifold structure ---  of the tangent bundle is given by
$$\dot{x}^{a} \mapsto \dot{x}^{a'} = \dot{x}^{b} \left(\frac{\partial x^{a'}}{\partial x^{b}}\right),$$
which is in complete agreement with the classical case: the coordinates $\dot{x}$ are informally referred to as \emph{velocities}. Similarly, we can define the cotangent bundle of a supermanifold, which we denote as $\sT^{*}M$ as the supermanifold equipped with (adapted) local coordinates $(x^{a},\: p_{b})$, where $\widetilde{p_{b}} = \widetilde{b}$.  The induced changes of (fibre) coordinates are
$$p_{a} \mapsto p_{a'} =  \left(\frac{\partial x^{b}}{\partial x^{a'}} \right)p_{b},$$
again in complete agreement with the classical case: the coordinates $p$ are informally referred to as \emph{momenta}.\par
\section{The canonical diffeomorphism $\alpha$}\label{AppDVBs}
Here we show that there exists a diffeomorphism $\alpha: \sT \sT^{*}M \rightarrow \sT^{*}\sT M$ for any supermanifold $M$. To our knowledge, this was first proved by Th. Voronov \cite{Voronov2002}, and we include a (sketch of a) proof here in order to keep this paper reasonably self-contained. Our approach is via local coordinates. Remembering that momenta transform like derivatives and velocities like differentials (see Appendix \ref{AppTM}), it is a straightforward exersise to show that $\sT^{*} \sT M$ can be equipped with natural local coordinates $(x^{a}, \dot{x}^{b}, q_{c}, \dot{q}_{d})$ and that changes of coordinates are of the form
\begin{align*}
& x^{a'} = x^{a'}(x), && \dot{x}^{b'} = \dot{x}^{c}\left( \frac{\partial x^{b'}}{\partial x^{c}}\right),\\
& q_{c'} = \left( \frac{\partial x^{a}}{\partial x^{c'}}\right)q_{a}, && \dot{q}_{d'} = \left( \frac{\partial x^{a}}{\partial x^{d'}}\right)\dot{q}_{a} + \dot{x}^{c}\left( \frac{\partial x^{e'}}{\partial x^{c}}\right)\left( \frac{\partial^{2} x^{f}}{\partial x^{e'} \partial x^{d'}}\right)q_{f}.
\end{align*}
Not that there are no extra minus signs as compared with the classical situation. \par 
Similarly $\sT \sT^{*} M$ can be equipped with natural local coordinates $(x^{a}, p_{b}, \dot{x}^{c}, \dot{p}_{d})$, the  change of coordinates  for the $p$ and $\dot{p}$ are
\begin{align*}
& p_{d'} = \left( \frac{\partial x^{a}}{\partial x^{d'}}\right)p_{a} + \dot{x}^{c}\left( \frac{\partial x^{e'}}{\partial x^{c}}\right)\left( \frac{\partial^{2} x^{f}}{\partial x^{e'} \partial x^{d'}}\right)\dot{p}_{f}, && \dot{p}_{c'} = \left( \frac{\partial x^{a}}{\partial x^{c'}}\right)\dot{p}_{a}.
\end{align*}
Via inspection we see that, in these adapted local coordinates we have a diffeomorphism $\alpha: \sT \sT^{*}M \rightarrow \sT^{*}\sT M$ given by
$$(x^{a}, \dot{x}^{b}, q_{c}, \dot{q}_{d}) \circ \alpha  = (x^{a}, \dot{x}^{b}, \dot{p_{c}} , p_{d}).$$
Thus we are free to employ  the natural local coordinates on $\sT \sT^{*} M$ as local coordinates on $\sT^{*} \sT M$ and in doing so we can write 
$$\alpha(x,p, \dot{x}, \dot{p}) = (x, \dot{x}, \dot{p}, p),$$
in complete agreement with the classical case (see \cite{Tulczyjew:1977}).

\bibliographystyle{plainnat}
\bibliography{biblio}

\def\polhk#1{\setbox0=\hbox{#1}{\ooalign{\hidewidth
  \lower1.5ex\hbox{`}\hidewidth\crcr\unhbox0}}}
\begin{thebibliography}{36}
\providecommand{\natexlab}[1]{#1}
\providecommand{\url}[1]{\texttt{#1}}
\expandafter\ifx\csname urlstyle\endcsname\relax
  \providecommand{\doi}[1]{doi: #1}\else
  \providecommand{\doi}{doi: \begingroup \urlstyle{rm}\Url}\fi

\bibitem[Akulov and Duplij(1999)]{Akulov:1997}
V.~P. Akulov and Steven Duplij.
\newblock Nilpotent marsh and {SUSY} {QM}.
\newblock In \emph{Supersymmetries and quantum symmetries ({D}ubna, 1997)},
  volume 524 of \emph{Lecture Notes in Phys.}, pages 235--242. Springer,
  Berlin, 1999.
\newblock \doi{10.1007/BFb0104604}.
\newblock URL \url{http://dx.doi.org/10.1007/BFb0104604}.

\bibitem[Batchelor(1979)]{Batchelor:1979}
Marjorie Batchelor.
\newblock The structure of supermanifolds.
\newblock \emph{Trans. Amer. Math. Soc.}, 253:\penalty0 329--338, 1979.
\newblock ISSN 0002-9947.
\newblock \doi{10.2307/1998201}.
\newblock URL \url{http://dx.doi.org/10.2307/1998201}.

\bibitem[{Berezin} and {Marinov}(1977)]{Berezin:1977}
F.~A. {Berezin} and M.~S. {Marinov}.
\newblock {Particle spin dynamics as the grassmann variant of classical
  mechanics}.
\newblock \emph{Annals of Physics}, 104:\penalty0 336--362, April 1977.
\newblock \doi{10.1016/0003-4916(77)90335-9}.

\bibitem[Bruce(2014)]{Bruce:2014}
Andrew~James Bruce.
\newblock On curves and jets of curves on supermanifolds.
\newblock \emph{Arch. Math. (Brno)}, 50\penalty0 (2):\penalty0 115--130, 2014.
\newblock ISSN 0044-8753.
\newblock \doi{10.5817/AM2014-2-115}.
\newblock URL \url{http://dx.doi.org/10.5817/AM2014-2-115}.

\bibitem[Cari{\~n}ena and Figueroa(1994)]{Carinena:1994}
Jos{\'e}~F. Cari{\~n}ena and H{\'e}ctor Figueroa.
\newblock A geometrical version of {N}oether's theorem in supermechanics.
\newblock \emph{Rep. Math. Phys.}, 34\penalty0 (3):\penalty0 277--303, 1994.
\newblock ISSN 0034-4877.
\newblock \doi{10.1016/0034-4877(94)90002-7}.
\newblock URL \url{http://dx.doi.org/10.1016/0034-4877(94)90002-7}.

\bibitem[Cari{\~n}ena and Figueroa(1997)]{Carinena:1997}
Jos{\'e}~F. Cari{\~n}ena and H{\'e}ctor Figueroa.
\newblock Hamiltonian versus {L}agrangian formulations of supermechanics.
\newblock \emph{J. Phys. A}, 30\penalty0 (8):\penalty0 2705--2724, 1997.
\newblock ISSN 0305-4470.
\newblock \doi{10.1088/0305-4470/30/8/017}.
\newblock URL \url{http://dx.doi.org/10.1088/0305-4470/30/8/017}.

\bibitem[Cari{\~n}ena and Figueroa(2003)]{Carinena:2003}
Jos{\'e}~F. Cari{\~n}ena and H{\'e}ctor Figueroa.
\newblock Singular {L}agrangians in supermechanics.
\newblock \emph{Differential Geom. Appl.}, 18\penalty0 (1):\penalty0 33--46,
  2003.
\newblock ISSN 0926-2245.
\newblock \doi{10.1016/S0926-2245(02)00096-7}.
\newblock URL \url{http://dx.doi.org/10.1016/S0926-2245(02)00096-7}.

\bibitem[Carmeli et~al.(2011)Carmeli, Caston, and Fioresi]{Carmeli:2011}
Claudio Carmeli, Lauren Caston, and Rita Fioresi.
\newblock \emph{Mathematical foundations of supersymmetry}.
\newblock EMS Series of Lectures in Mathematics. European Mathematical Society
  (EMS), Z\"urich, 2011.
\newblock ISBN 978-3-03719-097-5.
\newblock \doi{10.4171/097}.
\newblock URL \url{http://dx.doi.org/10.4171/097}.

\bibitem[Casalbuoni(1976)]{Casalbuoni:1976}
R.~Casalbuoni.
\newblock The classical mechanics for bose-fermi systems.
\newblock \emph{Il Nuovo Cimento A (1965-1970)}, 33\penalty0 (3):\penalty0
  389--431, 1976.
\newblock ISSN 1826-9869.
\newblock \doi{10.1007/BF02729860}.
\newblock URL \url{http://dx.doi.org/10.1007/BF02729860}.

\bibitem[Deligne and Morgan(1999)]{MR1701597}
Pierre Deligne and John~W. Morgan.
\newblock Notes on supersymmetry (following {J}oseph {B}ernstein).
\newblock In \emph{Quantum fields and strings: a course for mathematicians,
  {V}ol. 1, 2 ({P}rinceton, {NJ}, 1996/1997)}, pages 41--97. Amer. Math. Soc.,
  Providence, RI, 1999.

\bibitem[Dumitrescu(2008)]{Dumitrescu2008}
Florin Dumitrescu.
\newblock Superconnections and parallel transport.
\newblock \emph{Pacific J. Math.}, 236\penalty0 (2):\penalty0 307--332, 2008.
\newblock ISSN 0030-8730.
\newblock \doi{10.2140/pjm.2008.236.307}.
\newblock URL \url{http://dx.doi.org/10.2140/pjm.2008.236.307}.

\bibitem[Freed(1999)]{Freed:1999}
D.~S. Freed.
\newblock \emph{{Five lectures on supersymmetry}}.
\newblock AMS, Providence, USA, 1999.
\newblock ISBN 978-0-8218-1953-1.
\newblock URL \url{http://bookstore.ams.org/FLS}.

\bibitem[Garnier and Wurzbacher(2012)]{Garnier2012}
St{\'e}phane Garnier and Tilmann Wurzbacher.
\newblock The geodesic flow on a {R}iemannian supermanifold.
\newblock \emph{J. Geom. Phys.}, 62\penalty0 (6):\penalty0 1489--1508, 2012.
\newblock ISSN 0393-0440.
\newblock \doi{10.1016/j.geomphys.2012.02.002}.
\newblock URL \url{http://dx.doi.org/10.1016/j.geomphys.2012.02.002}.

\bibitem[Gaw\c{e}dzki(1977)]{Gawedzki:1977}
K.~Gaw\c{e}dzki.
\newblock Supersymmetries---mathematics of supergeometry.
\newblock \emph{Ann. Inst. H. Poincar\'e Sect. A (N.S.)}, 27\penalty0
  (4):\penalty0 335--366, 1977.

\bibitem[Goertsches(2008)]{Goertsches2008}
O.~Goertsches.
\newblock Riemannian supergeometry.
\newblock \emph{Math. Z.}, 260\penalty0 (3):\penalty0 557--593, 2008.
\newblock ISSN 0025-5874.
\newblock \doi{10.1007/s00209-007-0288-z}.
\newblock URL \url{http://dx.doi.org/10.1007/s00209-007-0288-z}.

\bibitem[Grabowski and Urba{\'n}ski(1997)]{MR1482898}
J.~Grabowski and P.~Urba{\'n}ski.
\newblock Tangent and cotangent lifts and graded {L}ie algebras associated with
  {L}ie algebroids.
\newblock \emph{Ann. Global Anal. Geom.}, 15\penalty0 (5):\penalty0 447--486,
  1997.
\newblock ISSN 0232-704X.
\newblock \doi{10.1023/A:1006519730920}.
\newblock URL \url{http://dx.doi.org/10.1023/A:1006519730920}.

\bibitem[{Grabowski} et~al.(2009){Grabowski}, {de Le\'on}, {Marrero}, and
  {Mart{\'\i}n de Diego}]{Grabowski:2009}
Janusz {Grabowski}, Manuel {de Le\'on}, J.C. {Marrero}, and D.~{Mart{\'\i}n de
  Diego}.
\newblock {Nonholonomic constraints: a new viewpoint.}
\newblock \emph{{J. Math. Phys.}}, 50\penalty0 (1):\penalty0 013520, 17, 2009.
\newblock ISSN 0022-2488; 1089-7658/e.
\newblock \doi{10.1063/1.3049752}.

\bibitem[Heumann and Manton(2000)]{Heumann:2000}
R.~Heumann and N.~S. Manton.
\newblock Classical supersymmetric mechanics.
\newblock \emph{Ann. Physics}, 284\penalty0 (1):\penalty0 52--88, 2000.
\newblock ISSN 0003-4916.
\newblock \doi{10.1006/aphy.2000.6057}.
\newblock URL \url{http://dx.doi.org/10.1006/aphy.2000.6057}.

\bibitem[Ibort and Mar{\'{\i}}n-Solano(1993)]{Ibort:1993}
L.~A. Ibort and J.~Mar{\'{\i}}n-Solano.
\newblock Geometrical foundations of {L}agrangian supermechanics and
  supersymmetry.
\newblock \emph{Rep. Math. Phys.}, 32\penalty0 (3):\penalty0 385--409, 1993.
\newblock ISSN 0034-4877.
\newblock \doi{10.1016/0034-4877(93)90031-9}.
\newblock URL \url{http://dx.doi.org/10.1016/0034-4877(93)90031-9}.

\bibitem[Junker(1996)]{Junker:1996}
Georg Junker.
\newblock \emph{Supersymmetric methods in quantum and statistical physics}.
\newblock Texts and Monographs in Physics. Springer-Verlag, Berlin, 1996.
\newblock ISBN 3-540-61591-1.
\newblock \doi{10.1007/978-3-642-61194-0}.
\newblock URL \url{http://dx.doi.org/10.1007/978-3-642-61194-0}.

\bibitem[Junker and Matthiesen(1994)]{Junker:1994}
Georg Junker and Stephan Matthiesen.
\newblock Supersymmetric classical mechanics.
\newblock \emph{J. Phys. A}, 27\penalty0 (19):\penalty0 L751--L755, 1994.
\newblock ISSN 0305-4470.
\newblock URL \url{http://stacks.iop.org/0305-4470/27/L751}.

\bibitem[Junker and Matthiesen(1995)]{Junker:1995}
Georg Junker and Stephan Matthiesen.
\newblock Pseudoclassical mechanics and its solution. {A}ddendum to:
  ``{S}upersymmetric classical mechanics'' [{J}. {P}hys.\ {A} {\bf 27} (1994),
  no.\ 19, {L}751--{L}755].
\newblock \emph{J. Phys. A}, 28\penalty0 (5):\penalty0 1467--1468, 1995.
\newblock ISSN 0305-4470.
\newblock URL \url{http://stacks.iop.org/0305-4470/28/1467}.

\bibitem[Mac~Lane(1998)]{MacLane1998}
Saunders Mac~Lane.
\newblock \emph{Categories for the working mathematician}, volume~5 of
  \emph{Graduate Texts in Mathematics}.
\newblock Springer-Verlag, New York, second edition, 1998.
\newblock ISBN 0-387-98403-8.

\bibitem[Manin(1997)]{Manin:1997}
Yuri~I. Manin.
\newblock \emph{Gauge field theory and complex geometry}, volume 289 of
  \emph{Grundlehren der Mathematischen Wissenschaften [Fundamental Principles
  of Mathematical Sciences]}.
\newblock Springer-Verlag, Berlin, second edition, 1997.
\newblock ISBN 3-540-61378-1.
\newblock \doi{10.1007/978-3-662-07386-5}.
\newblock URL \url{http://dx.doi.org/10.1007/978-3-662-07386-5}.
\newblock Translated from the 1984 Russian original by N. Koblitz and J. R.
  King, With an appendix by Sergei Merkulov.

\bibitem[Manton(1999)]{Manton:1999}
N.~S. Manton.
\newblock Deconstructing supersymmetry.
\newblock \emph{J. Math. Phys.}, 40\penalty0 (2):\penalty0 736--750, 1999.
\newblock ISSN 0022-2488.
\newblock \doi{10.1063/1.532682}.
\newblock URL \url{http://dx.doi.org/10.1063/1.532682}.

\bibitem[Marmo et~al.(1992)Marmo, Mendella, and Tulczyjew]{Marmo:1992}
G.~Marmo, G.~Mendella, and W.~M. Tulczyjew.
\newblock Symmetries and constants of the motion for dynamics in implicit form.
\newblock \emph{Annales de l'I.H.P. Physique théorique}, 57\penalty0
  (2):\penalty0 147--166, 1992.
\newblock URL \url{http://eudml.org/doc/76582}.

\bibitem[Ongay-Larios and S{\'a}nchez-Valenzuela(1992)]{OngayLarios:1992}
F.~Ongay-Larios and O.~A. S{\'a}nchez-Valenzuela.
\newblock {$R^{1\vert 1}$}-supergroup actions and superdifferential equations.
\newblock \emph{Proc. Amer. Math. Soc.}, 116\penalty0 (3):\penalty0 843--850,
  1992.
\newblock ISSN 0002-9939.
\newblock \doi{10.2307/2159456}.
\newblock URL \url{http://dx.doi.org/10.2307/2159456}.

\bibitem[Sachse and Wockel(2011)]{Sachse2011}
C.~Sachse and C.~Wockel.
\newblock The diffeomorphism supergroup of a finite-dimensional supermanifold.
\newblock \emph{Adv. Theor. Math. Phys.}, 15\penalty0 (2):\penalty0 285--323,
  2011.
\newblock ISSN 1095-0761.
\newblock URL \url{http://projecteuclid.org/euclid.atmp/1337951925}.

\bibitem[Salgado and Vallejo-Rodr{\'{\i}}guez(2009)]{Salgado:2009}
Gil Salgado and Jos{\'e}~A. Vallejo-Rodr{\'{\i}}guez.
\newblock The meaning of time and covariant superderivatives in supermechanics.
\newblock \emph{Adv. Math. Phys.}, pages Art. ID 987524, 21, 2009.
\newblock ISSN 1687-9120.

\bibitem[Shvarts(1984)]{Shvarts:1984}
A.~S. Shvarts.
\newblock On the definition of superspace.
\newblock \emph{Teoret. Mat. Fiz.}, 60\penalty0 (1):\penalty0 37--42, 1984.
\newblock ISSN 0564-6162.

\bibitem[Tulczyjew(1977)]{Tulczyjew:1977}
W.~M. Tulczyjew.
\newblock The {L}egendre transformation.
\newblock \emph{Ann. Inst. H. Poincar\'e Sect. A (N.S.)}, 27\penalty0
  (1):\penalty0 101--114, 1977.

\bibitem[Varadarajan(2004)]{Varadarajan:2004}
V.~S. Varadarajan.
\newblock \emph{Supersymmetry for mathematicians: an introduction}, volume~11
  of \emph{Courant Lecture Notes in Mathematics}.
\newblock New York University, Courant Institute of Mathematical Sciences, New
  York; American Mathematical Society, Providence, RI, 2004.
\newblock ISBN 0-8218-3574-2.

\bibitem[Voronov(1984)]{Voronov:1984}
Alexander~A. Voronov.
\newblock Maps of supermanifolds.
\newblock \emph{Teoret. Mat. Fiz.}, 60\penalty0 (1):\penalty0 43--48, 1984.
\newblock ISSN 0564-6162.

\bibitem[Voronov(2002)]{Voronov2002}
Theodore Voronov.
\newblock Graded manifolds and {D}rinfeld doubles for {L}ie bialgebroids.
\newblock In \emph{Quantization, {P}oisson brackets and beyond ({M}anchester,
  2001)}, volume 315 of \emph{Contemp. Math.}, pages 131--168. Amer. Math.
  Soc., Providence, RI, 2002.
\newblock \doi{10.1090/conm/315/05478}.
\newblock URL \url{http://dx.doi.org/10.1090/conm/315/05478}.

\bibitem[Witten(1981)]{Witten:1981}
Edward Witten.
\newblock {Dynamical Breaking of Supersymmetry}.
\newblock \emph{Nucl. Phys.}, B188:\penalty0 513, 1981.
\newblock \doi{10.1016/0550-3213(81)90006-7}.

\bibitem[Yano and Ishihara(1973)]{MR0350650}
Kentaro Yano and Shigeru Ishihara.
\newblock \emph{Tangent and cotangent bundles: differential geometry}.
\newblock Marcel Dekker, Inc., New York, 1973.
\newblock Pure and Applied Mathematics, No. 16.

\end{thebibliography}


\end{document}